\newtheorem{proposition}{Proposition}
\newtheorem{lemma}{Lemma}
\theoremstyle{definition}
\theoremstyle{remark}
\newtheorem{remark}{Remark}
\begin{document}
\title[A higher rank Racah algebra and the $\mathbb{Z}_2^{n}$ Laplace-Dunkl operator]{A higher rank Racah algebra \\and the $\mathbb{Z}_2^{n}$ Laplace-Dunkl operator}
\author[H. De Bie]{Hendrik De Bie}
\address{Department of Mathematical Analysis, Faculty of Engineering and Architecture, Ghent University, Galglaan 2, 9000 Ghent, Belgium}
\email{Hendrik.DeBie@UGent.be}
\author[V.X. Genest]{Vincent X. Genest}
\address{Department of Mathematics, Massachusetts Institute of Technology, 77 Massachusetts Ave, Cambridge, MA 02139, USA}
\email{vxgenest@mit.edu}
\author[W. van de Vijver]{Wouter van de Vijver}
\address{Department of Mathematical Analysis, Faculty of Engineering and Architecture, Ghent University, Galglaan 2, 9000 Ghent, Belgium}
\email{Wouter.vandeVijver@UGent.be}
\author[L. Vinet]{Luc Vinet}
\address{Centre de Recherches Math\'ematiques, Universit\'e de Montr\'eal, P.O. Box 6128, Centre-ville Station, Montr\'eal, QC H3C 3J7, Canada}
\email{vinet@crm.umontreal.ca}
\subjclass[2010]{81Q80, 81R10, 81R12}
\date{}
\dedicatory{}
\begin{abstract}
A higher rank generalization of the (rank one) Racah algebra is obtained as the symmetry algebra of the Laplace-Dunkl operator associated to the $\mathbb{Z}_2^n$ root system. This algebra is also the invariance algebra of the generic superintegrable model on the $n$-sphere. Bases of Dunkl harmonics are constructed explicitly using a Cauchy-Kovalevskaia theorem. These bases consist of joint eigenfunctions of labelling Abelian subalgebras of the higher rank Racah algebra. A method to obtain expressions for both the connection coefficients between these bases and the action of the symmetries on these bases is presented.
\end{abstract}
\maketitle


\section{Introduction}
The purpose of this paper is to introduce a higher rank generalization of the Racah algebra. This algebra will emerge in the examination of the multifold tensor product representations of $\mathfrak{su}(1,1)$ associated to the $\mathbb{Z}_2^{n}$ Laplace-Dunkl operator. The higher rank Racah algebra will also be shown to be the symmetry algebra of the generic (scalar) superintegrable model on the $n$-sphere. 

The (rank 1) Racah algebra was initially introduced in \cite{Granovskii&Zhedanov-1988}  to provide an underpinning for the symmetries of the $6j$-symbols or Racah coefficients of $SU(2)$. It is defined as the infinite-dimensional associative algebra over $\mathbb{C}$ with generators $K_1$ and $K_2$ that satisfy, together with their commutator $K_3=\left[K_1,K_2\right]$, the following relations:
\begin{align}\label{eq:Racah}
\begin{split}
\left[K_2,K_3\right]&=K_2^2+\left\{K_1,K_2\right\}+dK_2+e_1, \\
\left[K_3,K_1\right]&=K_1^2+\left\{K_1,K_2\right\}+dK_1+e_2,
\end{split}
\end{align}
with $\left\{A,B\right\}:=AB+BA$ and where $d$, $e_1$, $e_2$ are structure constants. This algebra corresponds to the $q=1$ case of the Askey--Wilson or Zhedanov algebra \cite{Zhedanov-1991}, which encodes the bispectral properties of the Askey--Wilson polynomials, sitting atop the Askey scheme of hypergeometric orthogonal polynomials  \cite{Koekoek&Lesky&Swarttouw-2010}. In a similar fashion, the Racah algebra encodes the bispectral properties of the Racah polynomials. The Racah algebra has an alternative presentation in terms of generators $L_1$, $L_2$, $L_3$ and $F$ which is commonly referred to as the ``equitable'' or ``democratic'' presentation \cite{Levy-Leblond&Levy-Nahas-1965, Terwilliger-2006}. In this form one has $L_1+L_2+L_3=G$ and $\left[L_1,L_2\right]=\left[L_2,L_3\right]=\left[L_3,L_1\right]=F$, where $G$ is a central operator. The commutation relations then read 
\begin{align} \label{eq:eqRacah}
\begin{split}
\left[L_1,F\right]&=L_2L_1-L_1L_3+i_1, \\
\left[L_2,F\right]&=L_3L_2-L_2L_1+i_2,\\
\left[L_3,F\right]&=L_1L_3-L_3L_2+i_3,
\end{split}
\end{align}
where $i_1, i_2, i_3$ are structure constants (or central elements). The first work on (\ref{eq:eqRacah}) can be found in \cite{Gao&Wang&Hou-2013}. The details of the passage between \eqref{eq:Racah} and \eqref{eq:eqRacah} can also be found in \cite{Genest&Vinet&Zhedanov-2014}.

The Racah algebra arises in the context of superintegrable models. A quantum system described by a Hamiltonian $H$ with $d$ degrees of freedom is said to be maximally superintegrable if it admits $2d-1$ algebraically independent constants of motion, including $H$ itself. It is said to be superintegrable of second order if the conserved charges are of degree at most $2$ in the momenta. For a recent review, see \cite{Miller&Post&Winternitz-2013}. The superintegrable models of second order (without reflection operators) in two dimensions which admit separation of variables have been classified. They can all be obtained as limits or special cases of the so-called generic $3$-parameter system on the $2$-sphere \cite{Kalnins&Miller&Post-2013}. This model is governed by the Hamiltonian:
\begin{equation} \label{eq:Hamiltonian1}
 H:=\sum_{1\leq i < j \leq 3} J_{ij}^2
 +(x_1^2 + x_2^2 + x_{3}^2)\sum_{i=1}^{3}\frac{a_i}{x_i^2},
\end{equation}
where $J_{ij}=i(x_j \partial_{x_i} - x_i \partial_{x_j})$ are the angular momentum operators and where $a_1,a_2,a_3$ are real parameters. Of course, $x_1^2 + x_2^2 + x_{3}^2 = 1$ on the 2-sphere. The superintegrability of $H$ and the separation of variables of the associated Schr\"odinger equation is demonstrated in \cite{Harnad&Yermolayeva-2004} using the $R$-matrix formalism. The relevance of the Racah algebra in this context appeared when the symmetry algebra formed by the constants of motion of $H$ was realized in \cite{Kalnins&Miller&Post-2007} with the help of the difference operator of which the Wilson/Racah polynomials are eigenfunctions. A cogent analysis was subsequently provided in \cite{Genest&Vinet&Zhedanov-2014-2} in the framework of the Racah problem for $\mathfrak{su}(1,1)$. Upon using the singular oscillator realization of $\mathfrak{su}(1,1)$, it was observed  that the Hamiltonian $H$ can be identified with the total Casimir operator obtained when three such representations are combined. The symmetry algebra was then identified noting that it is generated by the intermediate Casimir operators which satisfy the relations of the (centrally extended) Racah algebra. This also provided an explanation for the occurrence of the Racah polynomials as overlap coefficients between wavefunctions separated in different spherical coordinate systems. The reader is referred to \cite{Genest&Vinet&Zhedanov-2014-3} for a review of this picture.  

It has been shown that the (equitable presentation of the) Racah algebra can also be obtained from quadratic elements in the enveloping algebra of $\mathfrak{su}(2)$ \cite{Gao&Wang&Hou-2013}, and the correspondence with the recoupling scheme has been given in \cite{Genest&Vinet&Zhedanov-2014}. Progress towards obtaining the symmetry algebra of Hamiltonian analogous to \eqref{eq:Hamiltonian1} on spheres of arbitrary dimensions and identifying the structure of the Racah algebra for higher rank has been made in \cite{Kalnins&Miller&Post-2011}, where the model on the $3$-sphere was examined and bases for the symmetry algebra representations obtained in terms of Tratnik's bivariate Racah polynomials \cite{Tratnik-1991}. Furthermore, an extension of the approach introduced in \cite{Genest&Vinet&Zhedanov-2014-2} and \cite{Genest&Vinet&Zhedanov-2014} was given in \cite{Post} to establish the connection between the recoupling of four copies of $\mathfrak{su}(1,1)$, the generic system on the 3-sphere, and these bivariate Racah polynomials. As this manuscript was being completed, an interesting characterization of the symmetry algebra of the generic superintegable system on the $n$-sphere in relation to the Kohno-Drinfeld algebra was produced \cite{Iliev-2016}.

Germane to the present analysis are the recent developments on the interconnectedness of Bannai--Ito algebras, recoupling of multifold tensor product representations of $\mathfrak{osp}(1,2)$, superintegrable systems with reflections, and the Dunkl operators associated to the $\mathbb{Z}_2^{n}$ root system. The (rank one) Bannai--Ito algebra was introduced in \cite{Tsujimoto&Vinet&Zhedanov-2012} to describe the bispectrality of the univariate Bannai--Ito polynomials. This associative algebra over $\mathbb{C}$ has three generators $A_i$ with $i=1,2,3$ satisfying
\begin{align}
\label{eq:Bannai-Ito} 
\{A_1,A_2\}=A_3+\omega_3 \quad \{A_2, A_3\}=A_1+\omega_1 \quad \{A_1, A_3\}=A_2+\omega_2,
\end{align}
where $\omega_1$, $\omega_2$, $\omega_3$ are structure constants; the algebra \eqref{eq:Bannai-Ito} can be viewed as the $q=-1$ case of the Askey--Wilson algebra. In \cite{Genest&Vinet&Zhedanov-2014-4}, the Racah coefficients of $\mathfrak{osp}(1,2)$ were found to be expressible in terms of the Bannai-Ito polynomials and the intermediate Casimir operators for three-fold tensor product representations of $\mathfrak{osp}(1,2)$ were correspondingly seen to verify a central extension of \eqref{eq:Bannai-Ito}. A superintegrable Hamiltonian involving reflection operators with the Bannai--Ito algebra as symmetry algebra was obtained in this framework by combining three parabosonic or Dunkl oscillator realizations of $\mathfrak{osp}(1,2)$ \cite{Genest&Vinet&Zhedanov-2014-5}. Related is the observation that the commutant of the Dunkl Laplacian on $\mathbb{S}^2$ also leads to a central extension of the Bannai-Ito algebra with the consequence that the space of Dunkl harmonics supports representations of the (rank one) Bannai--Ito algebra \cite{Genest&Vinet&Zhedanov-2015}. It is worth mentioning that embeddings of the Racah algebra into the Bannai--Ito algebra have been constructed \cite{Genest&Vinet&Zhedanov-2015-2}. The Dirac--Dunkl equation on the $2$-sphere has been analyzed in the same spirit as the scalar Laplace-Dunkl equation. It was also found to admit the Bannai-Ito algebra as its symmetry algebra \cite{DeBie&Genest&Vinet-2016}. This model proved instructive and somewhat simpler. It exhibited a lower computational complexity and brought on tools such as the Cauchy-Kovalevskaia map that proved quite useful to obtain separated wavefunctions as bases for representation spaces. The extension to arbitrary dimensions of the analysis of the symmetries of the Dirac-Dunkl equation on spheres has been carried out in \cite{DeBie&Genest&Vinet-2016-2} and it allowed to identify the structure of the higher rank Bannai-Ito algebra. It also paved the way towards the identification of the multivariable Bannai-Ito polynomials. 

As stated before, the objective of this paper is to similarly obtain the structure of the higher rank Racah algebra. The organization of the paper is as follows. In Section 2, the connection between the Laplace--Dunkl operator associated to $\mathbb{Z}_2^n$ and the Lie algebra $\mathfrak{su}(1,1)$ is reviewed. In Section 3, the rank one Racah algebra in the equitable presentation is displayed. In Section 4, the higher rank Racah algebra is constructed by examining multifold tensor product representations of $\mathfrak{su}(1,1)$. The connections with the Laplace-Dunkl operator for $\mathbb{Z}_2^{n}$, as well as with the superintegrable system on the $n$-sphere, are established. A Cauchy-Kovalevskaia theorem is used to construct bases for the $\mathbb{Z}_2^n$ Dunkl harmonics in Section 5. In Section 6, the connection coefficients between these bases are expressed in terms of Tratnik's multivariate Racah polynomials. A brief conclusion follows.

\section{$\mathbb{Z}_2^n$ Dunkl operators and $\mathfrak{su}(1,1)$}\label{DunklOperators}

Consider the Abelian reflection group $\mathbb{Z}_2^{n}=\mathbb{Z}_2\times \cdots \times \mathbb{Z}_2$. The corresponding Dunkl operators $T_1, \ldots, T_{n}$ acting on $\mathbb{R}^{n}$ are defined as follows:
\begin{align*}
T_i=\partial_{x_i}+\frac{\mu_i}{x_i}(1-r_i),\qquad i=1,\ldots,n,
\end{align*}
where $\mu_1, \ldots,\mu_n$ with $\mu_i> 0$ are real parameters, $\partial_{x_i}$ is the partial derivative with respect to $x_i$ and $r_i$ is the reflection operator in the $x_i=0$ hyperplane, i.e. $r_i f(x_i)=f(-x_i)$. It is obvious that one has $T_i T_j=T_j T_i$ for all $i,j\in \{1,\ldots, n\}$. The Laplace--Dunkl operator $\Delta$ associated to $\mathbb{Z}_2^{n}$ is defined by
\begin{align}\label{Laplace-Dunkl}
\Delta=\sum_{i=1}^{n}T_i^{2}.
\end{align}
The ``intermediate'' Laplace-Dunkl operators and squared position operators are defined as follows. Let $A\subset [n]$, where $[n]=\{1,\ldots ,n\}$, and define
\begin{align*}
\Delta_{A}=\sum_{i\in A}T_i^{2},\qquad \rVert x_{A}\rVert^{2}=\sum_{i\in A}x_i^{2}.
\end{align*}
In this notation, the ``full'' $n$-dimensional Laplace-Dunkl operator  $\Delta$ given in \eqref{Laplace-Dunkl} can be written as $\Delta_{[n]}$. For $\ell\leq n$, we shall also use the notation $\Delta_{[\ell]}$ and $\rVert x_{[\ell]}\rVert^{2}$ for $\Delta_{\{1,\ldots, \ell\}}$ and $\rVert x_{\{1,\ldots,\ell\}}\rVert^{2}$, respectively. 

The space $\mathcal{H}_k(\mathbb{R}^{n})$ of Dunkl harmonics of degree $k$ is defined as $\mathcal{H}_k(\mathbb{R}^{n}) :=\mathrm{Ker}\,\Delta\, \cap\, \mathcal{P}_k(\mathbb{R}^{n})$, where $\mathcal{P}_k(\mathbb{R}^{n})$ stands for the space of homogeneous polynomials of degree $k$ in $\mathbb{R}^{n}$. The following decomposition of homogeneous polynomials in terms of Dunkl harmonics is called the Fischer decomposition.
\begin{proposition}[\cite{Dunkl-2001}]
	\label{fischerdecomp}
	The space $\mathcal{P}_k(\mathbb{R}^{n})$ of homogeneous polynomials of degree $k$ has the direct sum decomposition
	\begin{align*}
	\mathcal{P}_k(\mathbb{R}^{n})=\bigoplus_{j=0}^{\lfloor \frac{k}{2}\rfloor}\| x_{[n]} \|^{2j}\,\mathcal{H}_{k-2j}(\mathbb{R}^{n}).
	\end{align*}
\end{proposition}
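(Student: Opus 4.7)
The plan is to establish the decomposition via the Fischer bilinear form on polynomials, defined by
$$
[p,q] := \bigl(p(T_1,\dots,T_n)\,q\bigr)(0),
$$
where $p(T_1,\dots,T_n)$ means the Dunkl operator obtained by substituting $T_i$ for $x_i$ in $p$. Because the $T_i$ pairwise commute this is well-defined, and a direct calculation shows it is symmetric. The first step is to record the key adjointness relation $[x_i p, q] = [p, T_i q]$, which follows immediately from the definition together with the commutativity of the $T_i$. Summing over $i$ yields the relation I actually need, namely
$$
\bigl[\, \|x_{[n]}\|^{2}\, p,\ q\,\bigr] \;=\; [\,p,\ \Delta q\,].
$$

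The second step is to show that the restriction of $[\cdot,\cdot]$ to $\mathcal{P}_k(\mathbb{R}^n)$ is positive definite under the standing assumption $\mu_i>0$. I would first verify orthogonality of monomials of distinct multidegrees, which reduces the question to evaluating $[x^{\alpha},x^{\alpha}]$ for a single multi-index $\alpha$. Because the $T_i$ involve only their own variable (together with the reflection $r_i$), this separates into a product of one-variable computations, and one checks directly that $[x_i^{a},x_i^{a}]$ is a positive constant depending on $a$ and $\mu_i>0$. Hence $[\cdot,\cdot]$ is an inner product on each $\mathcal{P}_k(\mathbb{R}^n)$.

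With nondegeneracy in hand, the third step identifies the orthogonal complement of $\|x_{[n]}\|^{2}\mathcal{P}_{k-2}(\mathbb{R}^n)$ inside $\mathcal{P}_k(\mathbb{R}^n)$. Using the adjointness from step one,
$$
h \perp \|x_{[n]}\|^{2}\mathcal{P}_{k-2}(\mathbb{R}^n) \ \Longleftrightarrow\ [\Delta h, q]=0 \text{ for all } q\in\mathcal{P}_{k-2}(\mathbb{R}^n) \ \Longleftrightarrow\ \Delta h = 0,
$$
where the last equivalence uses nondegeneracy of $[\cdot,\cdot]$ on $\mathcal{P}_{k-2}(\mathbb{R}^n)$. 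Thus the complement is exactly $\mathcal{H}_k(\mathbb{R}^n)$, which gives the orthogonal splitting $\mathcal{P}_k(\mathbb{R}^n) = \mathcal{H}_k(\mathbb{R}^n) \oplus \|x_{[n]}\|^{2}\mathcal{P}_{k-2}(\mathbb{R}^n)$. Iterating this one-step decomposition on $\mathcal{P}_{k-2}(\mathbb{R}^n)$, then on $\mathcal{P}_{k-4}(\mathbb{R}^n)$, and so on, telescopes into the statement of the proposition.

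The main obstacle is the positive-definiteness claim in step two: the formal machinery of steps one and three is essentially automatic, but step two requires a genuine computation to rule out degenerate directions, and is the place where the hypothesis $\mu_i>0$ is used in an essential way. An alternative route that avoids any positivity input is to compute the commutator $[\Delta,\|x_{[n]}\|^{2}] = 4\mathbb{E}+2n+4\sum_i\mu_i$, where $\mathbb{E}$ is the Euler operator, and to argue inductively that $\Delta : \|x_{[n]}\|^{2}\mathcal{P}_{k-2}(\mathbb{R}^n)\to\mathcal{P}_{k-2}(\mathbb{R}^n)$ is a bijection, so that $\mathcal{H}_k(\mathbb{R}^n)\cap \|x_{[n]}\|^{2}\mathcal{P}_{k-2}(\mathbb{R}^n) = 0$; combined with a dimension count this again yields the proposition.
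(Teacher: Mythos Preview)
The paper does not supply its own proof of this proposition; it is simply quoted from the Dunkl--Xu reference. Your Fischer-pairing argument is essentially the standard proof found there and is correct: the adjointness $[\|x_{[n]}\|^2 p,\,q] = [p,\,\Delta q]$ together with positive definiteness (which genuinely needs $\mu_i > -\tfrac{1}{2}$, so the paper's hypothesis $\mu_i>0$ certainly suffices) identifies $\mathcal{H}_k(\mathbb{R}^n)$ as the orthogonal complement of $\|x_{[n]}\|^2\mathcal{P}_{k-2}(\mathbb{R}^n)$, and iteration does the rest. The alternative route you sketch via $[\Delta,\|x_{[n]}\|^2]=4\mathbb{E}_{[n]}+4\gamma_{[n]}$ is also sound and is in fact closer in spirit to the paper's own toolkit, since the $\mathfrak{su}(1,1)$ relations recorded immediately after this proposition give exactly that commutator.
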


An important observation for the purposes of this paper, due to Heckman \cite{Heckman-1991}, is that the Dunkl-Laplace operator, together with the squared position operator and the dilation operator, realize the $\mathfrak{su}(1,1)$ relations.
\begin{proposition}[\cite{Heckman-1991}]
\label{su2prop}
Let $A \subset [n]$ and define
\begin{align}
\label{SU11}
A_0 	= 	\frac{1}{2}(\mathbb{E}_{A}+\gamma_{A}),\qquad
J_{+} 	=	\frac{\rVert x_{A}\rVert^{2}}{2},\qquad
J_{-}	=	\frac{\Delta_{A}}{2},
\end{align}
where $\mathbb{E}_{A}=\sum_{i\in A}x_i\partial_{x_i}$ is the Euler (or dilation) operator for the set $A$, and where 
\begin{align}
\label{gamma-A}
\gamma_A=\frac{\rvert A\rvert}{2}+\sum_{i\in A} \mu_i.
\end{align}
The operators $A_0$, $J_{\pm}$ satisfy the $\mathfrak{su}(1,1)$ relations
\begin{align}
[A_0, J_{\pm}] = \pm J_{\pm},\qquad [J_{-}, J_{+}] = 2 A_0.
\end{align}
\end{proposition}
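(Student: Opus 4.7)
The plan is to verify the three relations separately; since the scalar $\gamma_A$ is central it will play a role only in $[J_-,J_+]$. The two ``weight'' relations $[A_0,J_\pm]=\pm J_\pm$ reduce to degree-counting under $\mathbb{E}_A$, whereas $[J_-,J_+]=2A_0$ is the main computation and rests on the Dunkl exchange relation $[T_i,x_i]=1+2\mu_i r_i$, together with the anti-commutation $r_ix_i+x_ir_i=0$.

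For the weight relations I would first check that $\mathbb{E}_A$ commutes with every $r_j$ (since $r_ix_i\partial_{x_i}=x_i\partial_{x_i}r_i$, and $r_i$ acts trivially on $x_j\partial_{x_j}$ for $j\neq i$), and then that for $i\in A$ one has $[\mathbb{E}_A,x_i^2]=2x_i^2$ and $[\mathbb{E}_A,T_i]=-T_i$, the latter because both $\partial_{x_i}$ and $x_i^{-1}$ have degree $-1$ in $x_i$. Summing over $i\in A$ gives $[\mathbb{E}_A,\|x_A\|^2]=2\|x_A\|^2$ and $[\mathbb{E}_A,\Delta_A]=-2\Delta_A$, which upon halving yield $[A_0,J_\pm]=\pm J_\pm$.

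For the remaining relation I would expand $[\Delta_A,\|x_A\|^2]=\sum_{i,j\in A}[T_i^2,x_j^2]$. The off-diagonal terms vanish because $T_i$ and $r_i$ act as the identity on $x_j$ when $j\neq i$. On the diagonal, the Dunkl exchange relation together with $r_ix_i+x_ir_i=0$ produces the clean identity $[T_i,x_i^2]=2x_i$, and one more application of the Leibniz rule gives $[T_i^2,x_i^2]=4x_iT_i+2+4\mu_i r_i$. Substituting $x_iT_i=x_i\partial_{x_i}+\mu_i(1-r_i)$, the reflection terms $\pm 4\mu_i r_i$ cancel, so summing over $i\in A$ yields $4\mathbb{E}_A+2|A|+4\sum_{i\in A}\mu_i = 4(\mathbb{E}_A+\gamma_A)$, and dividing by $4$ recovers exactly $2A_0$.

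The main obstacle is purely bookkeeping: the sign flips from $r_ix_i=-x_ir_i$ and $r_iT_i=-T_ir_i$ must be tracked consistently, and one must confirm that the reflection contributions produced by $[T_i^2,x_i^2]$ precisely cancel those arising when $x_iT_i$ is rewritten in terms of $\mathbb{E}_A$. No input beyond the definition of $T_i$ and the elementary commutator algebra is required.
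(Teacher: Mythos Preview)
Your proof is correct. The paper itself does not give a proof of this proposition; it simply records the result as a known observation due to Heckman and cites \cite{Heckman-1991}. Your direct verification via the Dunkl exchange relation $[T_i,x_i]=1+2\mu_i r_i$ and the anticommutation $r_ix_i+x_ir_i=0$ is the standard argument, and all the bookkeeping you outline (the cancellation of the $\pm 4\mu_i r_i$ terms when rewriting $x_iT_i$ as $x_i\partial_{x_i}+\mu_i(1-r_i)$, and the resulting sum $4\mathbb{E}_A+2|A|+4\sum_{i\in A}\mu_i=4(\mathbb{E}_A+\gamma_A)$) checks out.
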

The Casimir operator for $\mathfrak{su}(1,1)$, which commutes with $A_0$ and $J_{\pm}$, is given by
\begin{align}
\label{su-Casimir}
C := A_0^2 - J_{+}J_{-} - A_0.
\end{align}
The Casimir operator $C_A$ associated to the $\mathfrak{su}(1,1)$ realization \eqref{SU11} has the expression
\begin{align}
\label{Casimir}
C_A = \frac{1}{4}\left( \left( \mathbb{E}_A + \gamma_A\right)^2-2\left( \mathbb{E}_A + \gamma_A\right)- \| x_A \|^2 \Delta_A\right).
\end{align}
By construction, one has
\begin{equation}
\left[ C_A, \Delta_A \right]=0, \quad \left[ C_A, \rVert x_{A}\rVert^{2} \right]=0.
\end{equation}
The following lemma is easily derived from the above observations.
\begin{lemma}
For $A \subset [n]$, the operator $C_A$ satisfies
\begin{equation*}
\left[C_A, \Delta_{[n]} \right]=0,
\end{equation*}
\end{lemma}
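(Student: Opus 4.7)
The plan is to split $\Delta_{[n]}$ into the part supported on $A$ and the part supported on its complement, and handle the two pieces separately. Writing $B = [n] \setminus A$, we have
\begin{equation*}
\Delta_{[n]} = \Delta_A + \Delta_B,
\end{equation*}
so it suffices to show $[C_A, \Delta_A] = 0$ and $[C_A, \Delta_B] = 0$.

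The first commutator is immediate from the equation displayed just before the lemma, since $\Delta_A = 2 J_{-}$ in the $\mathfrak{su}(1,1)$ realization associated with $A$, and $C_A$ is the corresponding Casimir. So the only real work is the second commutator. For this, I would inspect the pieces out of which $C_A$ is built, namely $\mathbb{E}_A = \sum_{i\in A} x_i \partial_{x_i}$, $\|x_A\|^2 = \sum_{i \in A} x_i^2$, $\Delta_A = \sum_{i \in A} T_i^2$, and the constant $\gamma_A$. Each of these involves only the variables $x_i$ with $i \in A$ (and the associated reflections $r_i$), whereas $T_j$ for $j \in B$ involves only $x_j$ and $r_j$ with $j \in B$. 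Since variables and reflections indexed by disjoint sets commute, and since the Dunkl operators $T_i$ pairwise commute, every $T_j$ with $j \in B$ commutes term-by-term with each building block of $C_A$.

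Thus $[C_A, T_j] = 0$ for every $j \in B$, hence $[C_A, T_j^2] = 0$, and summing over $j \in B$ yields $[C_A, \Delta_B] = 0$. Combining this with $[C_A, \Delta_A] = 0$ gives the claimed identity $[C_A, \Delta_{[n]}] = 0$.

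There is no genuine obstacle here: the argument is essentially a ``disjoint supports commute'' observation, leveraging the Abelian nature of the $\mathbb{Z}_2^n$ reflection group (so that $r_i$ acts trivially on $x_j$ for $j \neq i$) together with the commutativity of the $T_i$ already noted after their definition. The only point to take care of is that each constituent of $C_A$ really is built solely from operators indexed by $A$, which is manifest from the expression in \eqref{Casimir}.
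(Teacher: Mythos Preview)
Your proof is correct and is precisely the natural elaboration of what the paper intends: the paper does not spell out a proof but simply states that the lemma ``is easily derived from the above observations,'' and your splitting $\Delta_{[n]}=\Delta_A+\Delta_B$ together with the disjoint-support argument is exactly the observation being alluded to.
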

Hence for every $A \subset [n]$, $C_{A}$ is a symmetry of $\Delta_{[n]}$, the Laplace-Dunkl operator associated to the root system $\mathbb{Z}_2^{n}$. It is clear that for two generic sets $A\neq B$, the operators $C_{A}$ and $C_{B}$ will not commute. The symmetry algebra they generate will be obtained below using the connection between $\Delta_{[n]}$ and the recoupling of $\mathfrak{su}(1,1)$ representations. This invariance algebra will also be shown to correspond to that of the generic system on the $n$-sphere. Before tackling the higher rank case, let us quickly review the presentation of the Racah algebra in the rank one case.

\section{The rank one Racah algebra}
The universal (centrally extended) Racah algebra of rank $1$ has $7$ generators. The generators $C_a$, $C_b$, $C_c$, and $C_{abc}$ are central. The three other generators are denoted $C_{ab}$, $C_{bc}$ and $C_{ac}$. The generators are related to one another as follows
\begin{align}\label{A}
C_{abc}=C_{ab}+C_{bc}+C_{ac}-C_a-C_b-C_c.
\end{align}
To write down the relations of the Racah algebra, we introduce an extra element $F$ defined as
\begin{align}\label{B}
2F:=[C_{ab},C_{bc}] = [C_{ac},C_{ab}] = [C_{bc},C_{ac}].
\end{align}
In light of \eqref{A}, the definition \eqref{B} is consistent. The non-trivial commutation relations have the expression 
\begin{align} \label{RankoneRacah}
\begin{split} \\
[C_{ab},F]&=C_{bc}C_{ab}-C_{ab}C_{ac}+\left(C_b-C_a\right)\left(C_{c}-C_{abc}\right), \\
[C_{bc},F]&=C_{ac}C_{bc}-C_{bc}C_{ab}+\left(C_c-C_b\right)\left(C_{a}-C_{abc}\right),\\
[C_{ac},F]&=C_{ab}C_{ac}-C_{ac}C_{bc}+\left(C_a-C_c\right)\left(C_{b}-C_{abc}\right).
\end{split}
\end{align}
This algebra agrees with the equitable presentation \eqref{eq:eqRacah} where the $L_i$ play the role of the $C_{xy}$ generators. The structure constants $ i_1$, $i_2$ and $i_3$ are here replaced by central elements.

\section{Tensor algebra approach to a higher rank Racah algebra}
In \cite{Genest&Vinet&Zhedanov-2014-2, Genest&Vinet&Zhedanov-2015-2}, the rank-one Racah algebra was seen to arise in considering three-fold tensor product representations of $\mathfrak{su}(1,1)$. We shall here generalize this approach by taking the $n$-fold product. In \cite{Lehrer&Zhang-2006} $n$-fold tensor products of Lie algebras have been studied in a more general setting. Consider the $\mathfrak{su}(1,1)$ comultiplication $\Delta$
\begin{equation}
\label{comultiplication} 
\Delta(A_0)=A_0\otimes 1+1\otimes A_0, \qquad \Delta(J_\pm)=J_\pm\otimes 1+1\otimes J_\pm .
\end{equation}
The comultiplication is an algebra morphism which extends to the universal enveloping algebra. Hence, it is possible to calculate the comultiplication of any element of $\mathcal{U}(\mathfrak{su}(1,1))$.
Define the following elements:
$$
\mathbf{C}_1:=C, \qquad  \mathbf{C}_n:=(\underbrace{1\otimes\ldots\otimes 1}_{n-2 \text{ times }} \otimes \Delta)(\mathbf{C}_{n-1}),
$$
where $C$ is the Casimir operator of $\mathfrak{su}(1,1)$ given in \eqref{su-Casimir}. Consider the map
$$
\tau_k: \bigotimes_{i=1}^{m-1} \mathcal{U}(\mathfrak{su}(1,1))\rightarrow \bigotimes_{i=1}^{m} \mathcal{U}(\mathfrak{su}(1,1)),
$$
which acts as follows on the homogeneous tensor products:
$$
\tau_k(t_1\otimes \ldots \otimes t_{m-1}):=t_1\otimes \ldots \otimes t_{k-1} \otimes 1 \otimes t_k \otimes \ldots \otimes t_{m-1} .
$$
and extend it by linearity.  The map $\tau$ adds a $1$ at the $k$-th place. This allows to define the following:
\begin{align}
\label{Casimir-Upper}
C^A:=\left(\prod_{k \in \left[ n\right] \backslash A}^{\longrightarrow} \tau_k \right)\left(\mathbf{C}_{|A|}\right).
\end{align}
with $A$ a subset of $[n]$. The set $A$ is the superindex whereas in the expression (\ref{Casimir}) it is the subindex. Notice that the order in which the $\tau_k$ act, matters. Take for example $n=5$ and $A=\left\{1,3\right\}$. To have a $1$ on the fourth place in the tensor product one needs to act with $\tau_4$. If one then acts by $\tau_k$, $k<4$, that $1$ will move to the next place. To avoid this, the proper order in our example must be 
\[
\prod_{k \in \left[ n\right] \backslash A}^{\longrightarrow} \tau_k =\tau_5\tau_4\tau_2.
\] 
One could ask if the order of the indices in $A$ matters.  By looking at the explicit form of, for instance, $\mathbf{C}_2$:
$$
\mathbf{C}_2=\Delta(C)=C\otimes 1+1\otimes C+2 A_0\otimes A_0-J_+\otimes J_--J_-\otimes J_+,
$$
one can see that switching the first and second positions does not change $\mathbf{C}_2$. This is true in general owing to the cocommutativity of the coproduct $\Delta$.

We define the generalized Racah algebra, denoted by $R(n)$, as the algebra generated by the $C^A$ with $A\subset[n]$. This set of generators is not independent, as can be seen from the relation
\begin{align}
\label{Relation}
C^A=\sum_{\left\{i,j\right\}\subset A} C^{ij}-\left(|A|-2\right)\sum_{i \in A} C^i .
\end{align}
In light of \eqref{Relation}, it suffices to exhibit the commutation relations between the $C^{A}$ with $|A|=2$ to fully describe the Racah algebra in terms of generators and relations. For convenience, we shall write $C^{ij}$ instead of $C^{\{i,j\}}$ for these elements. Similarly, we shall write $C^i$ instead of $C^{\{i\}}$ for the elements with only one index; it is easily seen that these elements are central. One observes that two elements with no index in common commute, e.g. $C^{12}C^{34}=C^{34}C^{12}$. Assume $i$, $j$ and $k$ to be distinct indices. Let $F^{ijk}$ be defined as
$$
F^{ijk}:=\frac{1}{2}\left[ C^{ij},C^{jk} \right].
$$
For $n=3$, the following relation completes the relation set that defines the Racah algebra: 
$$
\left[ C^{jk},F^{ijk}\right]=C^{ik}C^{jk}-C^{jk}C^{ij}+\left(C^k-C^j\right)\left(C^i-C^{ijk}\right).
$$
One notes that the above relations coincide in three dimensions with those of the centrally extended rank one Racah algebra that are obtained by taking $A$ to be the set $\{a,b,c\}$ in \eqref{RankoneRacah}. In higher dimension there are additional commutators to calculate. If $n\geq 4$, one finds the extra relation
$$\left[ F^{ijk},C^{kl}\right]=\left(C^{il}-C^i-C^l\right)\left(C^{jk}-C^j-C^k\right)-\left(C^{ik}-C^i-C^k\right)\left(C^{jl}-C^j-C^l\right).
$$
Let us simplify the above formula by introducing $P^{ij}:=C^{ij}-C^i-C^j$. In terms of the operators $P^{ij}$, the expression for the Casimir operators $C^{A}$ becomes
$$
C^A=\sum_{\left\{i,j\right\}\subset A} P^{ij}+\sum_{i \in A} C^i,
$$
and the commutation relations take the form
\begin{align*} 
&\left[ P^{ij},P^{jk} \right] = 2 F^{ijk}, \\
&\left[ P^{jk},F^{ijk}\right]= P^{ik}P^{jk}-P^{jk}P^{ij}+2P^{ik}C^j-2P^{ij}C^k,\\
&\left[ P^{kl},F^{ijk}\right]=P^{ik}P^{jl}-P^{il}P^{jk}.
\end{align*}
When $n\geq 5$, there are two additional commutators to calculate. One finds
\begin{align*}
&\left[ F^{ijk},F^{jkl} \right]=F^{jkl}P^{ij}-F^{ikl}\left(P^{jk}+2C^j\right)-F^{ijk}P^{jl}, \\
&\left[ F^{ijk},F^{klm} \right]=F^{ilm}P^{jk}-P^{ik}F^{jlm} .
\end{align*}
These relations are sufficient to fully describe the generalized Racah algebra in any dimension. These results are synthesized in the following proposition.
\begin{proposition}
The generalized Racah algebra $R(n)$ is generated by the Casimir operators $C^A$ defined in \eqref{Casimir-Upper}, where $A\subset [n]$. In light of the relation \eqref{Relation} between the generators, the defining relations of $R(n)$ can be presented only in terms of $C^{ij}$ with $i\neq j$, and the central elements $C^i$. Let $P^{ij}$ and $F^{ijk}$ be defined by
\begin{align*}
P^{ij} = C^{ij}-C^i-C^j,\qquad F^{ijk} = \frac{1}{2} \left[ P^{ij},P^{jk} \right].
\end{align*}
The defining relations of the Racah algebra $R(n)$ are
\begin{align}
\label{Gen-Racah}
\begin{aligned} 
&\left[ P^{ij},P^{jk} \right] = 2 F^{ijk}, \\
&\left[ P^{jk},F^{ijk}\right]= P^{ik}P^{jk}-P^{jk}P^{ij}+2P^{ik}C^j-2P^{ij}C^k,\\
&\left[ P^{kl},F^{ijk}\right]=P^{ik}P^{jl}-P^{il}P^{jk}.  \\
&\left[ F^{ijk},F^{jkl} \right]=F^{jkl}P^{ij}-F^{ikl}\left(P^{jk}+2C^j\right)-F^{ijk}P^{jl}, \\
&\left[ F^{ijk},F^{klm} \right]=F^{ilm}P^{jk}-P^{ik}F^{jlm},
\end{aligned}
\end{align}
where $i$, $j$, $k$, $l$ and $m$ are pairwise distinct indices in the set $[n]$. 
\end{proposition}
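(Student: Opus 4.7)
My plan is to verify each of the relations directly, separating the work by the number of distinct indices involved and exploiting the pairwise-interaction decomposition of each $C^A$ that comes from the coproduct.

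First I would establish the decomposition \eqref{Relation}. Because $\Delta$ is an algebra morphism and $C = A_0^2 - J_+ J_- - A_0$, the iterated coproduct $C_m$ expands as $\sum_i C^{(i)} + \sum_{\{i,j\}} \bigl(2 A_0^{(i)} A_0^{(j)} - J_+^{(i)} J_-^{(j)} - J_-^{(i)} J_+^{(j)}\bigr)$, where superscripts indicate the tensor slot. After inserting identities for the positions outside $A$ via the maps $\tau_k$, this gives $C^A = \sum_{i\in A} C^i + \sum_{\{i,j\}\subset A} P^{ij}$, which is algebraically equivalent to \eqref{Relation}. In particular, $P^{ij}$ is the pure interaction between sites $i$ and $j$.

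Second, for the two three-index relations, I would restrict to a fixed triple $\{i,j,k\}\subset[n]$. The subalgebra generated by $C^i,C^j,C^k,C^{ij},C^{jk},C^{ik},C^{ijk}$ coincides with the three-fold tensor-product Casimir algebra of $\mathfrak{su}(1,1)$, since the tensor factors at positions outside $\{i,j,k\}$ can be absorbed into the identity. This is exactly the rank-one setting treated in \cite{Genest&Vinet&Zhedanov-2014-2, Genest&Vinet&Zhedanov-2015-2}, so the centrally extended Racah relations of Section 3 transfer verbatim, yielding both $[P^{ij},P^{jk}] = 2F^{ijk}$ and the second relation of \eqref{Gen-Racah}.

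Third, the genuinely multi-index relations require direct computation using $P^{ij} = 2A_0^{(i)} A_0^{(j)} - J_+^{(i)} J_-^{(j)} - J_-^{(i)} J_+^{(j)}$. Two reductions are essential: $[P^{ij}, P^{kl}] = 0$ whenever the index sets are disjoint, and the central $C^i$ drop out of every commutator. For the third relation, Jacobi combined with disjoint-support vanishing gives
\[
\bigl[P^{kl}, F^{ijk}\bigr] = \tfrac{1}{2}\bigl[P^{kl},[P^{ij},P^{jk}]\bigr] = \tfrac{1}{2}\bigl[P^{ij},[P^{kl},P^{jk}]\bigr] = -\bigl[P^{ij}, F^{jkl}\bigr],
\]
after which the remaining commutator is expanded in the $\mathfrak{su}(1,1)$ generators at sites $i,j,k,l$ and rearranged using the elementary brackets $[A_0,J_\pm]=\pm J_\pm$ and $[J_-,J_+]=2A_0$ to match $P^{ik}P^{jl} - P^{il}P^{jk}$. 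The last two relations, involving $F$-$F$ commutators on four and five sites, are attacked analogously by writing each $F$ as a nested commutator of $P$'s and applying Jacobi repeatedly together with the disjoint-support vanishing.

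The main obstacle will be the combinatorial bookkeeping for the final two relations: each commutator expands into many summands indexed by the tensor slots on which the $\mathfrak{su}(1,1)$ generators land, and matching the claimed right-hand side requires careful regrouping. That said, every elementary step is a mechanical $\mathfrak{su}(1,1)$ commutator calculation, and the disjoint-support vanishing keeps all expansions finite. I would not attempt a separate completeness claim (that \eqref{Gen-Racah} are the \emph{only} relations needed); the proposition as stated asserts that these relations hold, which is what the above calculations establish.
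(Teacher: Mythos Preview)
Your proposal is correct and is essentially the paper's approach: the paper's proof reads in its entirety ``The result follows by direct computations,'' and your plan is a sensible organization of exactly those computations---first reducing the three-index identities to the known rank-one Racah relations, then handling the four- and five-index identities by Jacobi together with the disjoint-support vanishing $[P^{ij},P^{kl}]=0$. Your closing remark that the argument verifies the relations rather than their completeness as a presentation is also accurate and matches what the paper actually proves.
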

\begin{proof}
	The result follows by direct computations.
\end{proof}
The next lemma follows from the basic definitions
\begin{lemma}\label{commutators}
$\left[C^A,C^B\right]=0$ if $A \subset B$, $B\subset A$ or $A\cap B=\emptyset$.
\end{lemma}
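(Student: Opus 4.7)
The plan is to handle the two cases separately: the disjoint case $A \cap B = \emptyset$, and the nested case $A \subset B$ (the case $B \subset A$ being symmetric). In both instances the key observation is that $C^A$, although written as an element of the $n$-fold tensor algebra through the insertion maps $\tau_k$, has $1$ in every slot not indexed by $A$.

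For the disjoint case, if $A \cap B = \emptyset$ then every slot $i \in [n]$ receives a nontrivial entry from at most one of $C^A$ and $C^B$. Hence the two elements factor slot-by-slot through disjoint tensor positions, and $[C^A, C^B] = 0$ follows immediately.

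For the nested case, write $B = A \sqcup A'$ with $A' = B \setminus A$. By coassociativity of the comultiplication $\Delta$, the iterated coproduct producing $C_{|B|}$ may be reorganised by first forming the diagonal copies of $\mathfrak{su}(1,1)$ separately on $A$ and on $A'$, and then combining them through a single final binary coproduct. Writing $\widetilde{A_0^A}, \widetilde{J_\pm^A}$ for the diagonal generators supported on the $A$-slots (and $\widetilde{A_0^{A'}}, \widetilde{J_\pm^{A'}}$ analogously for $A'$), the expression $C = A_0^2 - A_0 - J_+ J_-$ applied after this outer coproduct yields
\begin{equation*}
C^B = C^A + C^{A'} + 2\,\widetilde{A_0^A}\,\widetilde{A_0^{A'}} - \widetilde{J_+^A}\,\widetilde{J_-^{A'}} - \widetilde{J_-^A}\,\widetilde{J_+^{A'}}.
\end{equation*}
Now $C^A$ is the Casimir element for the diagonal $\mathfrak{su}(1,1)$ supported on $A$, and therefore commutes with each of $\widetilde{A_0^A}$ and $\widetilde{J_\pm^A}$; it also commutes with $C^{A'}$ and with the $A'$-supported generators by the disjoint case just established. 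Since every summand of $C^B$ is built from these two families, we obtain $[C^A, C^B] = 0$.

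The only genuine subtlety lies in the nested case: one must check carefully that, after applying all the $\tau_k$ insertions, coassociativity does produce the displayed decomposition of $C^B$ inside the $n$-fold tensor algebra, with $C^A$ appearing as an outright summand and with the $A$-diagonal $\mathfrak{su}(1,1)$ generators arising cleanly from the cross terms. Once this bookkeeping is settled, the commutation follows at once from the Casimir property of $C^A$ together with the disjoint-case result.
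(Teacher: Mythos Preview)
Your proof is correct. The paper itself offers no argument beyond the single sentence ``The next lemma follows from the basic definitions,'' so your proposal is in fact supplying the details the authors suppress. The disjoint case is exactly as you say: after the $\tau_k$ insertions, $C^A$ and $C^B$ sit in complementary tensor slots and commute trivially. For the nested case your decomposition
\[
C^B = C^A + C^{A'} + 2\,\widetilde{A_0^A}\,\widetilde{A_0^{A'}} - \widetilde{J_+^A}\,\widetilde{J_-^{A'}} - \widetilde{J_-^A}\,\widetilde{J_+^{A'}}
\]
is precisely the two-factor coproduct formula for $C_2$ that the paper itself displays just before equation~\eqref{Relation}, lifted via coassociativity to the grouping $B = A \sqcup A'$; the commutation then follows from the Casimir property of $C^A$ on the $A$-slots together with the disjoint case on the $A'$-slots. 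The bookkeeping you flag as the ``only genuine subtlety'' is indeed routine once one observes that the $\tau_k$'s are algebra maps and that the iterated coproduct is independent of bracketing.
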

Consider the tower of sets $[2]\subset[3]\subset \dots \subset [n-1]$. Because of lemma \ref{commutators} the generators corresponding to this tower of sets are mutually commutative. This yields an Abelian subalgebra denoted by
\begin{align}\label{LabellingAbelianAlgebra}
 \widehat{\mathcal{Y}}_n=\langle C^{[2]}, C^{[3]},\ldots, C^{[n-1]}\rangle 
\end{align}
which allows to label representation vectors as shall be seen. To every tower of sets one can associate such an Abelian subalgebra. It can be explicitly (and independently) verified that the operators $C_A$ of formula (\ref{Casimir}) in section \ref{DunklOperators} realize the generalized Racah algebra $R(n)$. To distinguish the abstract algebra from the Dunkl realization, we use upper and lower indices respectively on the symbols denoting the generators and their representations.
\begin{proposition}
The differential-difference operators $C_A$, given by \eqref{Casimir}, satisfy
\begin{align}
\label{rel}
C_A=\sum_{\left\{i,j\right\}\subset A} C_{ij}-\left(|A|-2\right)\sum_{i \in A} C_i .
\end{align}
Putting $P_{ij} = C_{ij} -C_i - C_j$, and $F_{ijk} = \frac{1}{2} \left[ P_{ij},P_{jk} \right]$, it is verified that these operators realize the generalized Racah algebra \eqref{Gen-Racah}.
\end{proposition}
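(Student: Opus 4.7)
The plan is to split the proposition into two tasks: first, establish the relation \eqref{rel} between the concrete Casimirs; second, verify that the $P_{ij}$ and $F_{ijk}$ built from the $C_A$ satisfy the five relations of \eqref{Gen-Racah}. For the first task, I will work directly from \eqref{Casimir}: writing $e_i := \mathbb{E}_{\{i\}} + \gamma_i = x_i \partial_{x_i} + \tfrac{1}{2} + \mu_i$, one has $\mathbb{E}_A + \gamma_A = \sum_{i\in A} e_i$, and since operators indexed by different coordinates commute, expanding $(\sum_{i\in A} e_i)^2$ and $\| x_A \|^2 \Delta_A$ yields
\[
4C_A = \sum_{i\in A}\bigl(e_i^2 - 2e_i - x_i^2 T_i^2\bigr) + \sum_{\{i,j\}\subset A}\bigl(2 e_i e_j - x_i^2 T_j^2 - x_j^2 T_i^2\bigr).
\]
The diagonal sum is $\sum_{i\in A} 4 C_i$, and applying the same computation to $A=\{i,j\}$ shows that the bracketed summand on the right equals $4(C_{ij} - C_i - C_j) = 4 P_{ij}$. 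Since each $C_i$ is absorbed $|A|-1$ times into the pairwise terms, collecting contributions produces \eqref{rel}.

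For the second task, rather than compute the five commutators by hand, the plan is to exhibit a representation of the abstract algebra $R(n)$ under which $C^A \mapsto C_A$ and then inherit \eqref{Gen-Racah} from the preceding proposition. Applying Proposition~\ref{su2prop} to every singleton $\{i\} \subset [n]$ produces $n$ pairwise commuting $\mathfrak{su}(1,1)$ triples on $\mathbb{C}[x_1,\ldots,x_n]$, endowing this polynomial space with the structure of a module over $\bigotimes_{i=1}^n \mathfrak{su}(1,1)$. Because $\mathbb{E}_A$, $\gamma_A$, $\| x_A \|^2$ and $\Delta_A$ are all additive in $i\in A$, the triple $(A_0,J_+,J_-)$ of \eqref{SU11} attached to $A$ is precisely the image, under this module map, of the iterated coproduct \eqref{comultiplication} on the $|A|$ factors indexed by $A$, with units inserted in the positions $[n]\setminus A$ via the maps $\tau_k$. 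Consequently the abstract Casimir $C^A$ of \eqref{Casimir-Upper} maps to $C_A$, and every relation among the $C^A$ transfers verbatim to the Dunkl realization.

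The main obstacle is the bookkeeping in the identification $C^A \leftrightarrow C_A$: one has to be careful to match the comultiplication with the additivity of the Dunkl generators over the index set, and to check that the insertion operators $\tau_k$ correctly correspond to the ``do nothing on $x_k$'' action in the polynomial realization. Once this matching is verified, the five relations \eqref{Gen-Racah} are a direct consequence of the preceding proposition, with no further commutator calculations required. A purely computational alternative would be to expand each of those relations from the explicit formulas for the $T_i$ and $x_i$; this is mechanical but substantially longer and less illuminating, and I would fall back on it only if the representation-theoretic identification turned out to have an unexpected snag.
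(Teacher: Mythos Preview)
Your proof is correct, but the second half takes a different route from the paper. For the relation \eqref{rel}, both you and the paper proceed by direct computation from \eqref{Casimir}; your expansion in terms of the $e_i$ is in fact more explicit than the paper, which simply records the closed formulas for $C_i$ and $C_{ij}$ in terms of the Dunkl angular momentum operators $L_{ij}=x_iT_j-x_jT_i$ and leaves the bookkeeping implicit.

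For the verification of \eqref{Gen-Racah}, the paper does \emph{not} appeal to the abstract Proposition on $R(n)$: it computes the commutators by hand in the Dunkl realization, using the identity $[L_{ij},L_{jk}]=L_{ik}(1+2\mu_j r_j)$ to obtain $[C_{ij},C_{jk}]$ explicitly, and then states that the remaining relations follow ``after long but straightforward computations.'' This is presented deliberately as an \emph{independent} check, and it yields as a by-product a concrete expression for $F_{ijk}$ in terms of products of the $L_{ij}$'s. Your approach instead identifies the Dunkl model as a module over $\bigotimes_{i=1}^n\mathfrak{su}(1,1)$, matches the coproduct with the additivity of the triples \eqref{SU11} over $i\in A$, and thereby transports the relations \eqref{Gen-Racah} for free from the abstract $C^A$ to the concrete $C_A$. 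This is cleaner and avoids all commutator calculations, but it relies on the earlier abstract proposition rather than providing a second, computational confirmation, and it does not produce the explicit $L_{ij}$-formula for $F_{ijk}$ that the paper records.
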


\begin{proof}
From formula (\ref{Casimir}) we have
\begin{align}
\label{Exp-Cas-1}
C_i = \frac{1}{4} \left(\mu_i^2-\mu_i r_i - \frac{3}{4}\right),\qquad 
C_{ij} = \frac{1}{4} \left( - L_{ij}^2 + (\mu_i r_i + \mu_j r_j)^2 -1 \right),
\end{align}
with $L_{ij} = x_i T_j -x_j T_i$ the Dunkl angular momentum operators. Using 
\begin{align*}
[L_{ij}, L_{jk}]= L_{ik}(1+ 2 \mu_j r_j),
\end{align*}
one finds
\begin{align*}
[C_{ij}, C_{jk}]& = \frac{1}{16} [L_{ij}^2, L_{jk}^2]\\
& = \frac{1}{8} \left(L_{ij}^2 (1+ 2 \mu_k r_k) -  L_{ik}^2 (1+ 2 \mu_j r_j) -  L_{jk}^2 (1+ 2 \mu_i r_i) + 2 L_{ik} L_{ij} L_{jk} \right)\\
& := 2 F_{ijk}.
\end{align*}
The other relations follow after long but straightforward computations, again using $P_{ij} = C_{ij} -C_i - C_j$ to formulate the end result.
\end{proof}
Note that operators of the type $L^2_{ij}$ have also appeared in \cite{Dunkl-1999}, in the context of orthogonal bases.

\subsection{Connection with the generic system on the $n$-sphere}
The generalized Racah algebra $R(n)$, defined by the relations \eqref{Gen-Racah}, is also the symmetry algebra of the generic superintegrable system on the $n$-sphere. This can be shown as follows. Consider the Casimir operator $C_{[n]}$, as given in \eqref{Casimir}, and define the operator $\widetilde{C}_{[n]}$ via the gauge transformation
\begin{align*}
\widetilde{C}_{[n]} := \mathcal{G}(x_1,\ldots, x_{n}) C_{[n]} \mathcal{G}^{-1}(x_1,\ldots,x_{n}),
\end{align*}
where the gauge factor is given by $\mathcal{G}(x_1,\ldots, x_{n}) = |x_1|^{\mu_1}\cdots |x_{n}|^{\mu_n}$. It is directly verified using \eqref{Exp-Cas-1} that $\widetilde{C}_{[n]}$ has the expression
\begin{align}
\label{Ham-2}
\widetilde{C}_{[n]} =\frac{1}{4}\left( \sum_{1\leq i<j\leq n}J_{ij}^2 + (x_1^2 + \cdots + x_{n}^2)\sum_{k=1}^{n}\frac{\mu_k(\mu_k-r_k)}{x_k^2} + \frac{n(n-4)}{4}\right).
\end{align}
Upon comparing the above expression with \eqref{eq:Hamiltonian1}, it is seen that \eqref{Ham-2} corresponds to the generic system on the $n$-sphere, with the particularity of having reflection operators appear in the ``parameters'' $a_i = \mu_i(\mu_i-r_i)$ for $i=1,\ldots,n$. However, since $\widetilde{C}_{[n]}$ commutes with all reflections $r_1,\ldots, r_n$, one can diagonalize these reflections simultaneously with $\widetilde{C}_{[n]}$, so that for each of the $2^{n}$ parity sectors, one gets a bona fide (scalar) generic system on the $n$-sphere. For example, in the totally even sector, one has $a_i = \mu_i(\mu_i -1)$ for $i=1,\ldots,n$. 

In light of Lemma 2, the Hamiltonian defined by \eqref{Ham-2} is clearly superintegrable, and its symmetries are given by the gauge-transformed Casimir operators $\widetilde{C}_{A}$ with $A\in [n]$. The relation \eqref{rel} holds, and one has $\widetilde{C}_i = C_i$ for the symmetries with one index. The symmetries $\widetilde{C}_{ij}$ with two indices are given by
\begin{align*}
\widetilde{C}_{ij} = \frac{1}{4}\left( J_{ij}^2 + (x_i^2 + x_j^2) \left\{\frac{\mu_i(\mu_i-r_i)}{x_i^2}+ \frac{\mu_j(\mu_j-r_j)}{x_j^2}\right\}-1\right),
\end{align*}
and they satisfy the commutation relations \eqref{Gen-Racah} of the generalized Racah algebra.
\begin{remark}
As observed in \cite{Iliev-2016}, the intermediate Casimir operators $\widetilde{C}_{ij}$ realize the Drinfeld-Kohno relations, that is
\begin{align*}
[\widetilde{C}_{ij}, \widetilde{C}_{kl}] = 0,\qquad [\widetilde{C}_{ij}, \widetilde{C}_{ik} + \widetilde{C}_{jk}]=0,
\end{align*}
when $i$, $j$, $k$, $l$ and $m$ are pairwise distinct indices in the set $[n]$.
\end{remark}
\subsection{Embeddings of $R(3)$ into $R(n)$}
Let us now exhibit how the rank-one Racah algebra $R(3)$ can be embedded in the generalized Racah algebra $R(n)$. For a given set of indices $K$, we have already showed how to lift the Casimir of $\mathfrak{su}(1,1)$ to an intermediate Casimir operator $C^K$ acting on the $n$-fold tensor product. In similar fashion, one can lift the generators $A_0$ and $J_\pm$ of $\mathfrak{su}(1,1)$ to the $n$-fold tensor product. Let $A_{0,n}$ and $J_{\pm,n}$ be defined as
\begin{align*} 
A_{0,n}:=(\underbrace{1\otimes\ldots\otimes 1}_{n-2 \text{ times }} \otimes \Delta)(A_{0,n-1}) \\
J_{\pm,n}:=(\underbrace{1\otimes\ldots\otimes 1}_{n-2 \text{ times }} \otimes \Delta)(J_{\pm,n-1}),
\end{align*}
and define $A_0^{K}$, $J_{\pm}^{K}$ by applying the $\tau_k$ map
\begin{align*}
A_0^K:=\left(\prod_{k \in \left[ n\right] \backslash K}^{\longrightarrow} \tau_k \right)\left(A_{0,|K|}\right),\qquad
J_\pm^K:=\left(\prod_{k \in \left[ n\right] \backslash K}^{\longrightarrow} \tau_k \right)\left(J_{\pm,|K|}\right).
\end{align*}
The elements $A_0^K$ and $J_{\pm}^K$ generate $\mathfrak{su}^K(1,1)$, a $|K|$-fold tensor product representation of $\mathfrak{su}(1,1)$ acting on the components $k_i\in K\subset [n]$; the associated Casimir operator is $C^{K}$. Consider three pairwise disjoint subsets of $[n]$ and call them $K$, $L$ and $M$. One has the isomorphism
$$
\langle \mathfrak{su}^K(1,1),\mathfrak{su}^L(1,1),\mathfrak{su}^M(1,1)\rangle \cong \mathfrak{su}(1,1)\otimes\mathfrak{su}(1,1)\otimes\mathfrak{su}(1,1).
$$
The generators $X=A_0, J_{\pm}$ are mapped in the following way
\begin{align*} 
X^K \rightarrow X \otimes 1\otimes 1,\quad
X^L \rightarrow 1\otimes X \otimes 1, \quad
X^M \rightarrow 1\otimes 1\otimes X.
\end{align*}
This gives embeddings of the threefold tensor product into the $n$-fold tensor product. Similarly, for each triple of pairwise disjoint subsets of $[n]$, called $K$, $L$, and $M$, one has a realization of the Racah algebra $R(3)$. In this realization, the central elements are $C^K$, $C^L$, $C^M$, as well as $C^{K\cup  L \cup M}$, and the non-commuting operators are $C^{K\cup L}$, $C^{K \cup M}$ and $C^{L\cup M}$. For example, the analog of relation \eqref{A} is 
\begin{align*}
C^{K\cup  L \cup M} = C^{K\cup L} + C^{K \cup M} + C^{L\cup M} - C^K - C^{L} - C^{M}.
\end{align*}
The subalgebras of $R(n)$ generated in this way shall be denoted by $R^{K,L,M}(3)$

\section{A basis for the space of Dunkl harmonics}
We shall now construct a basis for the space of Dunkl harmonics using the Cauchy-Kovalevskaia isomorphism.

\subsection{The CK isomorphism} 
For $i \in [n]$, let $Q_i^{\pm}$ denote the projection operators
$$
Q_i^{\pm}= \frac{1}{2}\left( 1 \pm r_i \right),
$$
which satisfy
$$
\left(Q_i^{\pm} \right)^2 =Q_i^{\pm},\qquad  Q_i^{\pm} Q_i^{\mp} = 0.
$$
Upon taking defining $\mathcal{H}_k^{\pm}(\mathbb{R}^n) = Q_n^{\pm} \mathcal{H}_k(\mathbb{R}^n)$, one has the following decomposition  of the space of Dunkl harmonics
$$
\mathcal{H}_k(\mathbb{R}^n) = \mathcal{H}_k^+(\mathbb{R}^n) \oplus \mathcal{H}_k^-(\mathbb{R}^n).
$$
Note that one has also $r_n \mathcal{H}_k^{\pm}(\mathbb{R}^n) = \pm \mathcal{H}_k^{\pm}(\mathbb{R}^n)$. Let $h\in \mathcal{H}_k(\mathbb{R}^n)$, which means
\begin{align}
\label{dompe}
\Delta_{\left[ n \right]} h=0.
\end{align}
In proposition \ref{ActionofAbelianAlgebra}, the elements $h\in \mathcal{H}_k(\mathbb{R}^n)$ will also be seen to be eigenfunctions of the Casimir operator $C_{[n]}$; they thus correspond to wavefunctions of the system defined by \eqref{Ham-2}, up to a gauge transformation. Upon separating the variable $x_n$, one can write
\begin{align}
\label{first}
\Delta_{\left[ n \right]} =\Delta_{\left[ n-1 \right]} + T_n^2,
\end{align}
as well as
\begin{align}
\label{second}
h=\sum_{i=0}^k x_n^i p_{k-i},
\end{align}
with $p_{k-i} \in \mathcal{P}_{k-i}(\mathbb{R}^{n-1})$, a homogeneous polynomial of degree $k-i$ in the variables $x_1,\ldots, x_{n-1}$. Upon applying \eqref{first} on \eqref{second} given \eqref{dompe}, one finds
\begin{equation*}
 0=\sum_{j=2}^{k} \left[j\right]_n \left[j-1\right]_{n}x_n^{j-2}p_{k-j}+\sum_{j=0}^{k-2}x_n^j\Delta_{\left[ n-1 \right]} p_{k-j},
\end{equation*}
where
$$
\left[j\right]_n:= j+\mu_n(1-(-1)^j).
$$ 
Comparing the powers of $x_n$, it follows that
\[
\Delta_{\left[ n-1 \right]} p_{k-j} =-\left[j+2\right]_n \left[j+1\right]_{n}p_{k-j-2}.
 \]
This relation allows to obtain $p_{k-i}$ recursively. Given $p_k$ and $p_{k-1}$, one has
\begin{align*}
p_{k-2j}=\frac{(-1)^j\Delta^j_{[n-1]}}{2^{2j}j!\left(\mu_n+\frac{1}{2}\right)_{j}}p_k,\qquad 
p_{k-2j-1}=\frac{(-1)^j\Delta^j_{[n-1]}}{2^{2j}j!\left(\mu_n+\frac{3}{2}\right)_{j}}p_{k-1},
\end{align*}
where $ (x)_{n}=x(x+1)\ldots(x+n-1)$ is the raising Pochhammer symbol. With these expressions for $p_{k-i}$, one can write $h$ explicitly. Indeed, introduce the two maps
\begin{align}
\mathbf{CK}^0_{x_n}:=\sum_{j=0}^{\lfloor \frac{k}{2} \rfloor}  \frac{(-1)^jx_n^{2j}\Delta_{\left[n-1\right]}^j}{2^{2j}j!\left(\mu_n+\frac{1}{2}\right)_{j}} ,\qquad 
\mathbf{CK}^1_{x_n}:=\sum_{j=0}^{\lfloor \frac{k}{2} \rfloor}  \frac{(-1)^jx_n^{2j+1}\Delta_{\left[n-1\right]}^j}{2^{2j}j!\left(\mu_n+\frac{3}{2}\right)_{j}}.
\end{align}
It is clear that $h=\mathbf{CK}^0_{x_n}(p_k)+\mathbf{CK}^1_{x_n}(p_{k-1})$. This leads to the following proposition.
\begin{proposition}
\label{CKmaps}
There exist isomorphisms $\mathbf{CK}^0_{x_n}: \mathcal{P}_{k}(\mathbb{R}^{n-1}) \rightarrow\mathcal{H}_k^+(\mathbb{R}^n)$ and  $\mathbf{CK}^1_{x_n}: \mathcal{P}_{k-1}(\mathbb{R}^{n-1}) \rightarrow\mathcal{H}_k^-(\mathbb{R}^n)$ which are given by
\begin{align}
\label{CKplus}
\mathbf{CK}^0_{x_n} & = \Gamma(\mu_n+1/2)\widetilde{I}_{\mu_n-1/2}(x_{n}\sqrt{\Delta_{\left[n-1\right]}}),\\
\label{CKminus}
\mathbf{CK}^1_{x_n} & = \Gamma(\mu_n+3/2) x_n \widetilde{I}_{\mu_n+1/2}(x_n\sqrt{\Delta_{\left[n-1\right]}}),
\end{align}
with $\widetilde{I}_{\delta}(x)=(\frac{2}{x})^{\delta}I_{\delta}(x)$ and $I_{\delta}(x)$ the modified Bessel function \cite{Andrews&Askey&Roy-2001}. 
\end{proposition}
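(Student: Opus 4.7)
Proof proposal. The plan is to verify the two claims of the statement separately: first that the series expressions $\mathbf{CK}^0_{x_n}$ and $\mathbf{CK}^1_{x_n}$ define isomorphisms onto $\mathcal{H}_k^+(\mathbb{R}^n)$ and $\mathcal{H}_k^-(\mathbb{R}^n)$, and then that these series can be recast as the claimed Bessel function expressions.

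For the isomorphism claim, almost all of the work is already in the derivation preceding the proposition. Given any $h \in \mathcal{H}_k(\mathbb{R}^n)$ with expansion $h = \sum_{i=0}^k x_n^i\, p_{k-i}$ and $p_{k-i} \in \mathcal{P}_{k-i}(\mathbb{R}^{n-1})$, the recursion fixes every $p_{k-i}$ as an explicit multiple of either $p_k$ (for even $i$) or $p_{k-1}$ (for odd $i$). Since $r_n$ flips $x_n$ while leaving $x_1,\dots,x_{n-1}$ invariant, the splitting $\mathcal{H}_k(\mathbb{R}^n)=\mathcal{H}_k^+(\mathbb{R}^n)\oplus\mathcal{H}_k^-(\mathbb{R}^n)$ is exactly the even/odd-in-$x_n$ splitting, so $\mathcal{H}_k^+(\mathbb{R}^n)$ corresponds to $p_{k-1}=0$ and $\mathcal{H}_k^-(\mathbb{R}^n)$ to $p_k=0$. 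This identifies $\mathbf{CK}^0_{x_n}(p_k) \in \mathcal{H}_k^+(\mathbb{R}^n)$ and $\mathbf{CK}^1_{x_n}(p_{k-1}) \in \mathcal{H}_k^-(\mathbb{R}^n)$. Linearity is manifest, injectivity follows by reading off the $x_n^0$ (resp.\ $x_n^1$) coefficient of the image, and surjectivity holds because the free data $p_k$ (resp.\ $p_{k-1}$) ranges over all of $\mathcal{P}_k(\mathbb{R}^{n-1})$ (resp.\ $\mathcal{P}_{k-1}(\mathbb{R}^{n-1})$).

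For the Bessel function identification, I would expand the standard power series
$$\widetilde{I}_\delta(z) = \left(\frac{2}{z}\right)^\delta I_\delta(z) = \sum_{j\geq 0} \frac{(z/2)^{2j}}{j!\,\Gamma(j+\delta+1)},$$
substitute $z = x_n\sqrt{\Delta_{[n-1]}}$ with $\delta = \mu_n - 1/2$, and use the identity $\Gamma(\mu_n+1/2)/\Gamma(j+\mu_n+1/2) = 1/(\mu_n+1/2)^{(j)}$ to convert the gamma ratios into Pochhammer symbols. Matching the result term-by-term against the definition of $\mathbf{CK}^0_{x_n}$ reproduces the stated formula; the odd case is identical with $\delta = \mu_n + 1/2$ and an overall factor of $x_n$. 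The sum truncates automatically at $j=\lfloor k/2\rfloor$ because $\Delta_{[n-1]}^j$ annihilates polynomials of degree less than $2j$.

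The only delicate point, rather than a real obstacle, is that $\sqrt{\Delta_{[n-1]}}$ should be treated here as a formal symbol whose sole meaning is through its even powers: only $\Delta_{[n-1]}^j$ ever appears once the series is expanded, and the Bessel notation is a compact shorthand for a polynomial in $x_n^2\Delta_{[n-1]}$ acting on the argument. With this reading, the identification is purely algebraic and requires no analytic input.
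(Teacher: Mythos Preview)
Your argument follows the same route as the paper's: injectivity via restriction to $x_n=0$ (resp.\ $\partial_{x_n}|_{x_n=0}$), the Bessel identification via the power series, and surjectivity from the pre-proposition recursion that gives $h=\mathbf{CK}^0_{x_n}(p_k)+\mathbf{CK}^1_{x_n}(p_{k-1})$ for any harmonic $h$. The only difference is that the paper packages surjectivity as a dimension count---bounding $\dim\mathcal{H}_k(\mathbb{R}^n)$ above by $\dim(\mathcal{P}_k\oplus\mathcal{P}_{k-1})$ via the restriction map and below via injectivity of the $\mathbf{CK}$ maps---whereas you read it off directly from the parity splitting; the underlying content is the same and your phrasing is, if anything, slightly more direct.

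One minor caveat on the Bessel step: the standard series for $\widetilde{I}_\delta$ carries no alternating sign, while the defining series for $\mathbf{CK}^0_{x_n}$ has $(-1)^j$, so the literal term-by-term match you describe does not go through unless one reads the argument as $x_n\sqrt{-\Delta_{[n-1]}}$ (equivalently, uses $J_\delta$ in place of $I_\delta$). The paper's own proof glosses over this in exactly the same way, so it is a wrinkle in the stated formula rather than a defect in your strategy.
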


\begin{proof}
The explicit expressions for $\mathbf{CK}^0_{x_n}$ and $\mathbf{CK}^1_{x_n}$ are given above. By considering the hypergeometric expansion of the Bessel function one easily finds (\ref{CKplus}) and (\ref{CKminus}). $\mathbf{CK}^0_{x_n}$ will map to $\mathcal{H}_k(\mathbb{R}^d)$ by construction. Since all monomials in $\mathbf{CK}^0_{x_n}(p_k)$ are of even degree in $x_n$, it will map to $\mathcal{H}_k^+(\mathbb{R}^n)$.  One straightforwardly proves the equivalent for $\mathbf{CK}^{1}_{x_n}$. It remains to prove that these maps are isomorphisms. First we prove injectivity. Let $p_k \in \mathcal{P}_k^+(\mathbb{R}^{n-1})$ and $p_{k-1} \in \mathcal{P}_{k-1}^-(\mathbb{R}^{n-1})$. One has
\begin{align*}
\mathbf{CK}^0_{x_n}(p_k)\mid_{x_n=0}=p_k, \qquad
\frac{\partial \mathbf{CK}^1_{x_n}(p_{k-1})}{\partial x_n}\mid_{x_n=0}=p_{k-1}.
\end{align*}
Hence, the maps must be injective. For surjectivity we use an argument on the dimensions. Let $h$ be in $\mathcal{H}_k(\mathbb{R}^{n})$. We map this element onto the element 
$$
 (p_k,p_{k-1}):=(h\mid_{x_n=0},\frac{\partial h}{\partial x_n}\mid_{x_n=0}) \in \mathcal{P}_k(\mathbb{R}^{n-1})\oplus\mathcal{P}_{k-1}(\mathbb{R}^{n-1}).
$$
By the construction above, we know that $h=\mathbf{CK}^0_{x_n}(p_k)+\mathbf{CK}^1_{x_n}(p_{k-1})$. Hence, this map must be injective. 
This means that 
$$
\dim(\mathcal{H}_k(\mathbb{R}^{n}))\leq \dim(\mathcal{P}_k(\mathbb{R}^{n-1})\oplus\mathcal{P}_{k-1}(\mathbb{R}^{n-1})).
$$
The injectivity of the $\mathbf{CK}$ maps gives the following inequality.
$$
 \dim(\mathcal{P}_k(\mathbb{R}^{n-1})\oplus\mathcal{P}_{k-1}(\mathbb{R}^{n-1})) \leq \dim(\mathcal{H}^+_k(\mathbb{R}^{n}))+\dim(\mathcal{H}^-_k(\mathbb{R}^{n}))=\dim(\mathcal{H}_k(\mathbb{R}^{n})),
$$
Combining these two inequalities, we see that they must be equalities. The dimensions of $\mathcal{P}_k(\mathbb{R}^{n-1})$ and $\mathcal{P}_{k-1}(\mathbb{R}^{n-1})$ are equal to the dimensions of $\mathcal{H}^+_k(\mathbb{R}^{n})$ and $\mathcal{H}^-_k(\mathbb{R}^{n})$ respectively. The $\mathbf{CK}$ maps are isomorphisms.
\end{proof}

\begin{remark}
In \cite{DeBie&Genest&Vinet-2016, DeBie&Genest&Vinet-2016-2}, a CK isomorphism between the space of Clifford-valued homogeneous polynomials in $\mathbb{R}^{n-1}$ and the space of polynomial homogeneous null-solutions of the Dirac-Dunkl operator in $\mathbb{R}^n$ was established.
\end{remark}

\subsection{Construction of the basis}

The $\mathbf{CK}$ maps can be combined with the Fischer decomposition (Proposition \ref{fischerdecomp}) to construct an explicit basis for the space $\mathcal{H}_{k}(\mathbb{R}^{n})$. Consider the following tower of $\mathbf{CK}$ extensions and Fischer decompositions:
\begin{align}
\mathcal{H}_{k}^+(\mathbb{R}^{n})	&	\cong\mathbf{CK}_{x_n}^{0}\big[\mathcal{P}_{k}(\mathbb{R}^{n-1})\big] \nonumber
\\\nonumber
	&\cong\mathbf{CK}_{x_n}^{0}\big[\bigoplus_{l=0}^{\lfloor\frac{k}{2}\rfloor}\|x_{[n-1]}\|^{2l}\mathcal{H}_{k-2l}(\mathbb{R}^{n-1})\big]
\\\nonumber
	&\cong\mathbf{CK}_{x_n}^{0}\Big[\bigoplus_{l=0}^{\lfloor\frac{k}{2}\rfloor}\|x_{[n-1]}\|^{2l}\big(\mathcal{H}_{k-2l}^+(\mathbb{R}^{n-1})\oplus\mathcal{H}_{k-2l}^-(\mathbb{R}^{n-1})\big)\Big]
\\\nonumber
	&\cong\mathbf{CK}_{x_n}^{0}\Big[\bigoplus_{l=0}^{\lfloor\frac{k}{2}\rfloor}\|x_{[n-1]}\|^{2l} \Big( \mathbf{CK}_{x_{n-1}}^{0}\big[ \mathcal{P}_{k-2l}(\mathbb{R}^{n-2})\big]\oplus\mathbf{CK}_{x_{n-1}}^{1}\big[ \mathcal{P}_{k-2l-1}(\mathbb{R}^{n-2})\big]\Big)\Big]
\\
\label{Tower}
	&\cong\mathbf{CK}_{x_n}^{0}\Big[\bigoplus_{l=0}^{\lfloor\frac{k}{2}\rfloor}\|x_{[n-1]}\|^{2l}\Big( \mathbf{CK}_{x_{n-1}}^{0}\big[ \bigoplus_{j=0}^{\lfloor\frac{k}{2}\rfloor-l}\|x_{[n-2]}\|^{2j}\mathcal{H}_{k-2l-2j}(\mathbb{R}^{n-2})\big]
\\\nonumber 
	&\hspace{40mm} \oplus\mathbf{CK}_{x_{n-1}}^{1}\big[ \bigoplus_{j=0}^{\lfloor\frac{k}{2}\rfloor-l}\|x_{[n-2]}\|^{2j}\mathcal{H}_{k-2l-2j-1}(\mathbb{R}^{n-2})\big]\Big)\Big]
\\\nonumber
	&\cong\cdots \nonumber
\end{align}
until one reaches the scalars. Since the scalars are spanned by $1$, one concludes the following.
\begin{proposition}
Let $\mathbf{l}$ be defined as $\mathbf{l}=(l_{n-1},l_{n-2},\ldots,l_{2}, l_{1})$ where $l_1,\ldots, l_{n}$ are non-negative integers. Let $\vec \epsilon = (\epsilon_n, \ldots, \epsilon_1)$ with $\epsilon_j \in \{ 0 , 1\}$. Consider the set of functions $\mathscr{Y}_{\mathbf{l}}^{\vec \epsilon}(x_1,\ldots, x_{n})$ defined by
\begin{multline}
\label{Basis-1}
\mathscr{Y}_{\mathbf{l}}^{\vec \epsilon}(x_1,\ldots, x_{n})=
\\
\mathbf{CK}_{x_{n}}^{\epsilon_n}\bigg[\|{x_{[n-1]}\|^{2l_{n-1}}}\mathbf{CK}_{x_{n-1}}^{\epsilon_{n-1}}\Big[\|x_{[n-2]}\|^{2l_{n-2}}\big[\cdots \mathbf{CK}_{x_2}^{\epsilon_{2}}[\|x_{[1]}\|^{2l_1}\mathbf{CK}_{x_1}^{\epsilon_{1}}[1]]\big]\Big]\bigg] .
\end{multline}
The functions $\mathscr{Y}_{\mathbf{l}}^{\vec \epsilon}$ for which $k= \sum_{i=1}^n\epsilon_i+2\sum_{i=1}^{n-1} l_i$ form a basis for the space $\mathcal{H}_{k}(\mathbb{R}^{n})$ of $k$-homogeneous Dunkl harmonics.
\end{proposition}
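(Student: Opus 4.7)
The plan is induction on the dimension $n$, which amounts to formalizing the tower of decompositions displayed in \eqref{Tower}. The two inputs are Proposition \ref{CKmaps}, whose Cauchy--Kovalevskaia isomorphisms split $\mathcal{H}_k(\mathbb{R}^n)$ into two pieces built from polynomial spaces in $n-1$ variables, and Proposition \ref{fischerdecomp}, whose Fischer decomposition further resolves those polynomial spaces into direct sums of the form $\|x_{[n-1]}\|^{2l_{n-1}}\mathcal{H}_{k'}(\mathbb{R}^{n-1})$. Since every $\mathbf{CK}$-map is a linear isomorphism and every Fischer step is a direct-sum decomposition, peeling off one variable at a time preserves the direct-sum structure, and the iteration terminates once one reaches the scalars.

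The base case $n=1$ is handled directly. Here $\Delta_{[0]} = 0$, so $\mathbf{CK}_{x_1}^{0}[1] = 1$ and $\mathbf{CK}_{x_1}^{1}[1] = x_1$. The admissible indices are $\epsilon_1 \in \{0,1\}$ with $k=\epsilon_1$, and these correctly pick out single-element bases of $\mathcal{H}_0(\mathbb{R}) = \mathbb{C}$ and $\mathcal{H}_1(\mathbb{R}) = \mathbb{C}\,x_1$. For $k\geq 2$ the index set is empty, matching the fact that $\mathcal{H}_k(\mathbb{R}) = 0$ for generic $\mu_1 > 0$.

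For the inductive step, assume the statement in dimension $n-1$. The reflection $r_n$ commutes with $\Delta_{[n]}$, giving $\mathcal{H}_k(\mathbb{R}^n) = \mathcal{H}_k^+(\mathbb{R}^n) \oplus \mathcal{H}_k^-(\mathbb{R}^n)$, and Proposition \ref{CKmaps} yields
$$
\mathcal{H}_k(\mathbb{R}^n) \;\cong\; \bigoplus_{\epsilon_n \in \{0,1\}} \mathbf{CK}_{x_n}^{\epsilon_n}\bigl[\mathcal{P}_{k-\epsilon_n}(\mathbb{R}^{n-1})\bigr].
$$
Feeding each polynomial space through Proposition \ref{fischerdecomp} produces
$$
\mathcal{P}_{k-\epsilon_n}(\mathbb{R}^{n-1}) \;=\; \bigoplus_{l_{n-1}=0}^{\lfloor (k-\epsilon_n)/2\rfloor} \|x_{[n-1]}\|^{2l_{n-1}}\,\mathcal{H}_{k-\epsilon_n-2l_{n-1}}(\mathbb{R}^{n-1}).
$$
The induction hypothesis provides an explicit basis $\{\mathscr{Y}_{\mathbf{l}'}^{\vec \epsilon'}(x_1,\ldots,x_{n-1})\}$ of each $\mathcal{H}_{k-\epsilon_n-2l_{n-1}}(\mathbb{R}^{n-1})$, indexed by $\mathbf{l}'=(l_{n-2},\ldots,l_1)$ and $\vec\epsilon'=(\epsilon_{n-1},\ldots,\epsilon_1)$ subject to $k-\epsilon_n-2l_{n-1} = \sum_{i=1}^{n-1}\epsilon_i + 2\sum_{i=1}^{n-2}l_i$. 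Multiplying by $\|x_{[n-1]}\|^{2l_{n-1}}$ and then applying $\mathbf{CK}_{x_n}^{\epsilon_n}$ reproduces exactly the functions of \eqref{Basis-1}, with the constraint rearranging to the required $k = \sum_{i=1}^n \epsilon_i + 2\sum_{i=1}^{n-1} l_i$.

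Linear independence and spanning follow automatically because every step is either a linear isomorphism or an honest direct sum. There is no genuine obstacle; the argument is a routine induction whose only delicate aspect is keeping track of the index tuples $(\mathbf{l},\vec\epsilon)$ and the parity/degree balance across the inductive step.
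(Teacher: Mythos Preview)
Your argument is correct and is precisely the approach the paper takes: the tower of $\mathbf{CK}$-extensions and Fischer decompositions displayed in \eqref{Tower}, iterated ``until one reaches the scalars,'' is nothing other than your induction on $n$ written out in long form. You have simply made the inductive structure explicit and checked the base case, which the paper leaves implicit.
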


The upper index $\vec \epsilon$ also suggests the following direct sum decomposition of the harmonics of degree $k$.

\begin{proposition}
Let $\vec \beta = (\beta_n, \ldots, \beta_1)$ with $\beta_j \in \{ + , -\}$. One has
\[
\mathcal{H}_k(\mathbb{R}^n) = \bigoplus_{\vec \beta } \mathcal{H}_k^{\vec \beta}(\mathbb{R}^n),
\]
where the sum runs over all $\vec \beta \in \{ +,-\}^n$ and with
\[
\mathcal{H}_k^{\vec \beta}(\mathbb{R}^n):= \big\{ p \in \mathcal{H}_k(\mathbb{R}^n) \mid Q_i^{\beta_{i}}(p)=p, \quad i \in \{1, \dots, d\}\big\}.
\]

\end{proposition}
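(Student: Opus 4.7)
The plan is to reduce the statement to the observation that the $n$ commuting reflections $r_1,\ldots,r_n$ are mutually commuting involutions that preserve $\mathcal{H}_k(\mathbb{R}^n)$, and then appeal to simultaneous diagonalization via the projectors $Q_i^{\pm}$ already introduced in the previous subsection.

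First I would verify that each $r_i$ stabilises $\mathcal{H}_k(\mathbb{R}^n)$. The reflection $r_i$ preserves degree of homogeneity, hence maps $\mathcal{P}_k(\mathbb{R}^n)$ to itself. For the harmonicity side, the defining relations of the $\mathbb{Z}_2^n$ Dunkl operators give $r_i T_i = -T_i r_i$ and $r_i T_j = T_j r_i$ for $j\neq i$, so in either case $r_i T_j^2 = T_j^2 r_i$; summing over $j$ yields $[r_i,\Delta_{[n]}] = 0$. Consequently $r_i(\ker\Delta_{[n]}\cap\mathcal{P}_k(\mathbb{R}^n)) \subseteq \mathcal{H}_k(\mathbb{R}^n)$, as required.

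Next, since $\mathbb{Z}_2^n$ is abelian, the $r_i$ pairwise commute, and $r_i^2=1$ means each is diagonalisable with spectrum $\{+1,-1\}$; on any $r_i$-stable space, the $\pm 1$ eigenprojectors are exactly $Q_i^{\pm}=\tfrac{1}{2}(1\pm r_i)$. Commuting diagonalisable operators are simultaneously diagonalisable, so on $\mathcal{H}_k(\mathbb{R}^n)$ the joint spectral projections are the products $Q^{\vec\beta}:=\prod_{i=1}^n Q_i^{\beta_i}$ indexed by $\vec\beta\in\{+,-\}^n$. By construction, $\operatorname{Im}(Q^{\vec\beta}|_{\mathcal{H}_k(\mathbb{R}^n)}) = \mathcal{H}_k^{\vec\beta}(\mathbb{R}^n)$.

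To conclude I would use the resolution of the identity
\begin{equation*}
\sum_{\vec\beta\in\{+,-\}^n} Q^{\vec\beta} \;=\; \prod_{i=1}^n \bigl(Q_i^{+}+Q_i^{-}\bigr) \;=\; 1,
\end{equation*}
together with the orthogonality $Q^{\vec\beta}Q^{\vec\beta'}=\delta_{\vec\beta,\vec\beta'}Q^{\vec\beta}$ (which is inherited from $Q_i^{+}Q_i^{-}=0$ and the commutativity of the $r_i$). These two facts give a direct sum decomposition $\mathcal{H}_k(\mathbb{R}^n) = \bigoplus_{\vec\beta} \mathcal{H}_k^{\vec\beta}(\mathbb{R}^n)$, which is the claim. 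There is no real obstacle; the only point requiring care is the verification $[r_i,\Delta_{[n]}]=0$, which is the same calculation that underlies the fact that $\Delta_{[n]}$ is $\mathbb{Z}_2^n$-invariant, so the argument is essentially bookkeeping.
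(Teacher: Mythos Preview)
Your argument is correct and complete: the reflections $r_i$ are commuting involutions on $\mathcal{H}_k(\mathbb{R}^n)$ (the only nontrivial check being $[r_i,\Delta_{[n]}]=0$, which you handle), and the resolution of the identity $\sum_{\vec\beta}Q^{\vec\beta}=1$ together with $Q^{\vec\beta}Q^{\vec\beta'}=\delta_{\vec\beta,\vec\beta'}Q^{\vec\beta}$ gives the direct sum decomposition at once.

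The paper does not supply an explicit proof of this proposition. Its placement, immediately after the construction of the basis $\{\mathscr{Y}_{\mathbf{l}}^{\vec\epsilon}\}$ and introduced by the sentence ``The upper index $\vec\epsilon$ also suggests the following direct sum decomposition,'' indicates that the intended justification is via that basis: each $\mathscr{Y}_{\mathbf{l}}^{\vec\epsilon}$ has definite parity $\beta_i=(-1)^{\epsilon_i}$ in each variable by construction, so the basis vectors distribute among the $\mathcal{H}_k^{\vec\beta}(\mathbb{R}^n)$, and since they span $\mathcal{H}_k(\mathbb{R}^n)$ the sum of the subspaces is everything; disjointness is immediate from the eigenvalue condition. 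Your route is more elementary and self-contained---it uses only the $\mathbb{Z}_2^n$-invariance of $\Delta_{[n]}$ and linear algebra, and does not depend on the CK machinery or the Fischer decomposition. The paper's implicit route, on the other hand, has the advantage of simultaneously exhibiting an explicit basis adapted to the decomposition, which is what is actually needed downstream for the representation-theoretic applications.
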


\begin{remark}
Note that for $k<n$ the spaces $H_k^{\vec \beta}(\mathbb{R}^n) $ can be empty for certain choices of $\vec \beta$. 
\end{remark}

\subsection{Explicit expression for the basis functions}
Let us now present a lemma which characterizes the action of the operators $\Delta$ and $\rVert x \rVert^2$ on the space of Dunkl harmonics.
\begin{lemma}
Let $h_{\ell}\in \mathcal{H}_{\ell}(\mathbb{R}^{n})$. For non-negative integers $j,k$ such that $j\leq k$ one has
\begin{align*}
 \Delta_{\left[n\right]}^j\rVert x_{\left[n\right]} \rVert^{2k}h_{\ell}&=\rVert x_{\left[n\right]} \rVert^{2(k-j)}h_\ell \prod_{t=k-j+1}^{k}  4t\left(l+t-1+\gamma_{\left[n\right]}\right) \\
 &=2^{2j}(-k)_j(-l-k+1-\gamma_{\left[n\right]} )_j \rVert x_{\left[n\right]} \rVert^{2(k-j)}h_\ell ,
\end{align*}
\end{lemma}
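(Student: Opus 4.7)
The plan is to exploit the $\mathfrak{su}(1,1)$ realization of Proposition \ref{su2prop} with $A=[n]$. Setting $J_+ = \tfrac{1}{2}\|x_{[n]}\|^2$, $J_- = \tfrac{1}{2}\Delta_{[n]}$ and $A_0 = \tfrac{1}{2}(\mathbb{E}_{[n]}+\gamma_{[n]})$, the identity to be proved becomes, after absorbing the powers of $2$,
\begin{equation*}
J_-^{\,j} J_+^{\,k} h_\ell \;=\; (k)_j\,\bigl(\ell+k-1+\gamma_{[n]}\bigr)_j\, J_+^{\,k-j} h_\ell .
\end{equation*}
Since $h_\ell \in \mathcal{H}_\ell(\mathbb{R}^n)$ is Dunkl-harmonic and homogeneous of degree $\ell$, one has $J_- h_\ell = 0$ and $A_0 h_\ell = \tfrac{1}{2}(\ell+\gamma_{[n]}) h_\ell$, so that $h_\ell$ is a lowest-weight vector for this copy of $\mathfrak{su}(1,1)$.

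The main computation is a single-lowering identity. Expanding
\begin{equation*}
[J_-,J_+^{\,k}] \;=\; \sum_{i=0}^{k-1} J_+^{\,i}\,[J_-,J_+]\,J_+^{\,k-1-i},
\end{equation*}
substituting $[J_-,J_+]=2A_0$, and pushing each $A_0$ to the right using $[A_0,J_+^{\,m}] = m J_+^{\,m}$, one obtains
\begin{equation*}
[J_-,J_+^{\,k}] \;=\; 2k\, J_+^{\,k-1} A_0 \,+\, k(k-1)\, J_+^{\,k-1} .
\end{equation*}
Evaluating on $h_\ell$ and using $J_- h_\ell = 0$ together with the known $A_0$-weight yields the $j=1$ case
\begin{equation*}
J_- J_+^{\,k} h_\ell \;=\; k\bigl(\ell+k-1+\gamma_{[n]}\bigr)\, J_+^{\,k-1} h_\ell .
\end{equation*}

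The general statement follows by induction on $j$: writing $J_-^{\,j}J_+^{\,k}h_\ell = J_-^{\,j-1}\bigl(J_- J_+^{\,k} h_\ell\bigr)$, the previous display factors out the scalar $k(\ell+k-1+\gamma_{[n]})$ and leaves $J_-^{\,j-1} J_+^{\,k-1} h_\ell$, to which the inductive hypothesis (with $k$ replaced by $k-1$) applies. The resulting scalars telescope into the two descending Pochhammer chains $(k)_j = k(k-1)\cdots(k-j+1)$ and $(\ell+k-1+\gamma_{[n]})_j$. Restoring the original operators via $J_+^{\,m} = \|x_{[n]}\|^{2m}/2^{m}$ and $J_-^{\,j}J_+^{\,k} = \Delta_{[n]}^{\,j}\|x_{[n]}\|^{2k}/2^{j+k}$ reintroduces the prefactor $2^{2j}$, matching the stated formula. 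The only real pitfall is bookkeeping of the Pochhammer indices, which can be cross-checked against the equivalent product form $\prod_{t=k-j+1}^{k} 4t(\ell+t-1+\gamma_{[n]})$ given in the lemma.
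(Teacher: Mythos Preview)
Your proof is correct and follows exactly the approach indicated in the paper: induction using the $\mathfrak{su}(1,1)$ relations of Proposition~\ref{su2prop}. The paper's own proof is a one-line sketch (``This follows by induction, using the $\mathfrak{su}(1,1)$ relations in Proposition~\ref{su2prop}''), and you have supplied precisely the details that sketch asks the reader to fill in.
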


\begin{proof}
This follows by induction, using the $\mathfrak{su}(1,1)$ relations in Proposition \ref{su2prop}.
\end{proof}
In view of the explicit expressions for the CK-extension and the above lemma, one can expect the wavefunctions \eqref{Basis-1} to have an explicit expression in terms of hypergeometric functions. Consider the harmonic $\mathscr{Y}_{l_{n-1}\mathbf{l}}^{\epsilon_{n}\vec \epsilon}$ with $\mathbf{l}=\left(l_{n-2},\ldots,l_1\right)$ and $\vec \epsilon=\left(\epsilon_{n-1},\ldots,\epsilon_1\right)$ with $\epsilon_i \in \{0,1\}$. We can write this down as the CK-extension of a harmonic with one less variable.
\begin{align*}
\mathscr{Y}_{l_{n-1}\mathbf{l}}^{\epsilon_{n}\vec \epsilon}&=\mathbf{CK}_{x_n}^{\epsilon_n}\left( \rVert x_{\left[n-1\right]} \rVert^{2l_{n-1}} \mathscr{Y}_{\mathbf{l}}^{\vec \epsilon}\right) \\
&=\sum_{j=0}^\infty  \frac{(-1)^j x_n^{2j+\epsilon_n}\Delta_{\left[n-1\right]}^j}{2^{2j}j!\left(\mu_n+\frac{1}{2}+\epsilon_n\right)_{j}}\left( \rVert x_{\left[n-1\right]} \rVert^{2l_{n-1}} \mathscr{Y}_{\mathbf{l}}^{\vec \epsilon}\right) \\
&=\sum_{j=0}^{l_{n-1}}  \frac{(-1)^j (-l_{n-1})_j \left(-\deg{\mathscr{Y}_{\mathbf{l}}^{\vec \epsilon}}-l_{n-1}+1-\gamma_{\left[n-1\right]} \right)_j  x_n^{2j+\epsilon_n}}{j!\left(\mu_n+\frac{1}{2}+\epsilon_n\right)_{j}} \rVert x_{\left[n-1\right]} \rVert^{2(l_{n-1}-j)}\mathscr{Y}_{\mathbf{l}}^{\vec \epsilon}\\
&=x_n^{\epsilon_n}\rVert x_{\left[n-1\right]} \rVert^{2l_{n-1}}\mathscr{Y}_{\mathbf{l}}^{\vec \epsilon}\times \\
&\qquad \quad \sum_{j=0}^{l_{n-1}} \frac{ (l_{n-1})_j \left(-\deg{\mathscr{Y}_{\mathbf{l}}^{\vec \epsilon}}-l_{n-1}+1-\gamma_{\left[n-1\right]} \right)_j  }{j!\left(\mu_n+\frac{1}{2}+\epsilon_n\right)_{j} } \left( \frac{-x_n^2}{\rVert x_{[n-1]} \rVert^2} \right)^j \\
&=x_n^{\epsilon_n}\rVert x_{\left[n-1\right]} \rVert^{2l_{n-1}}\mathscr{Y}_{\mathbf{l}}^{\vec \epsilon} \times \\
& \qquad \quad {}_2F_1\left(-l_{n-1},-\deg{\mathscr{Y}_{\mathbf{l}}^{\vec \epsilon}}-l_{n-1}+1-\gamma_{\left[n-1\right]}; \mu_n+\frac{1}{2}+\epsilon_n,\frac{-x_n^2}{\rVert x_{[n-1]} \rVert^2}\right) \\
&=x_n^{\epsilon_n}\rVert x_{\left[n-1\right]} \rVert^{2l_{n-1}}\mathscr{Y}_{\mathbf{l}}^{\vec \epsilon}\times  \\
&\qquad \quad \frac{l_{n-1}!}{\left(\mu_{n}+\frac{1}{2}+\epsilon_n\right)_{l_{n-1}}} P_{l_{n-1}}^{\left(\mu_n-1/2+\epsilon_{n},-\deg(\mathscr{Y}_{l_{n-1}\mathbf{l}}^{\epsilon_{n}\vec \epsilon})-\gamma_{\left[n\right]}+1\right)}\left(\frac{2\rVert x_{[n]} \rVert^2}{\rVert x_{[n-1]} \rVert^2}-1\right).
\end{align*}
In the last step we replaced the hypergeometric function by a Jacobi polynomial. 
As one can see, a harmonic in $n$ variables can be written as the product of a harmonic in $n-1$ variables times a polynomial in $n$ variables. Repeating this step, one can write down an explicit expression for the original harmonic. Let $d_m=\epsilon_m+\sum_{i=1}^{m-1} 2l_i+\epsilon_i$, we have the following expression for the original harmonic:
\begin{align*} 
\mathscr{Y}_{l_{n-1}\mathbf{l}}^{\epsilon_{n}\vec \epsilon} =x_1^{\epsilon_1} \prod_{i=2}^n x_i^{\epsilon_i}&\left\|x_{\left[i-1\right]} \right\|^{2l_{i-1}}\times \\
&\frac{l_{i-1}!}{\left(\mu_{i}+\frac{1}{2}+\epsilon_i\right)_{l_{i-1}}} P_{l_{i-1}}^{\left(\mu_i-1/2+\epsilon_{i},-d_i-\gamma_{\left[i\right]}+1\right)}\left(\frac{2\rVert x_{[i]} \rVert^2}{\rVert x_{[i-1]} \rVert^2}-1\right).
\end{align*}

\begin{remark}
The same basis was obtained in a different way in \cite{Xu}. Our approach has the advantage that the action of the 
 Abelian subalgebra can be readily obtained, see Proposition \ref{ActionofAbelianAlgebra}.
\end{remark}

\subsection{The action of the Abelian subalgebra $\widehat{\mathcal{Y}}_n$}
It is of interest to determine the action of the Abelian subalgebra $\widehat{\mathcal{Y}}_n$, as defined in line (\ref{LabellingAbelianAlgebra}), on the Dunkl harmonics. The following proposition addresses that question.

\begin{proposition}\label{ActionofAbelianAlgebra}
Let $\mathscr{Y}_{\mathbf{l}}^{\vec \epsilon} \in \mathcal{H}_k^{\vec \beta}\left(\mathbb{R}^n\right)$ with $\beta_i=+$ if $\epsilon_i=0$ and $\beta_i=-$ if $\epsilon_i=1$ We have 
\[ C_{[m]}\mathscr{Y}_{\mathbf{l}}^{\vec \epsilon} =\frac{1}{4}\left( d_m+\gamma_{[m]}\right)\left(d_m+\gamma_{[m]}-2\right)\mathscr{Y}_{\mathbf{l}}^{\vec \epsilon}, \]
with $d_m=\epsilon_m+\sum_{i=1}^{m-1} 2l_i+\epsilon_i$ as before.
\end{proposition}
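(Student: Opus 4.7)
The plan is to exploit two structural facts about $\mathscr{Y}_{\mathbf{l}}^{\vec\epsilon}$. First, unwinding the Cauchy--Kovalevskaia tower \eqref{Tower} through its first $m$ stages, one can factor
\[
\mathscr{Y}_{\mathbf{l}}^{\vec\epsilon} = G\bigl(\|x_{[m]}\|^2, x_{m+1},\ldots,x_n\bigr)\, H_m(x_1,\ldots,x_m),
\]
where $H_m$ is the intermediate Dunkl harmonic of degree $d_m$ produced by the first $m$ CK-extensions, and $G$ depends on the variables $x_1,\ldots,x_m$ only through $\|x_{[m]}\|^2$. This last property is visible in the explicit Jacobi-polynomial product formula derived just before the proposition: every factor indexed by $i > m$ contains $x_1,\ldots,x_m$ only through the combinations $\|x_{[i-1]}\|^2$ and $\|x_{[i]}\|^2$, each of which involves $x_1,\ldots,x_m$ solely via $\|x_{[m]}\|^2$.

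Second, I would identify $C_{[m]}$ as the Casimir of the $\mathfrak{su}(1,1)$ realization attached to the first $m$ coordinates (Proposition \ref{su2prop}). Since $\Delta_{[m]} H_m = 0$ and $\mathbb{E}_{[m]} H_m = d_m H_m$, the expression \eqref{Casimir} gives directly
\[
C_{[m]} H_m = \tfrac{1}{4}(d_m+\gamma_{[m]})(d_m+\gamma_{[m]}-2)\, H_m.
\]
Then, using Lemma 2 to evaluate $\Delta_{[m]} \|x_{[m]}\|^{2k} H_m$ and combining with the Euler-operator piece of \eqref{Casimir}, a short calculation shows that the terms proportional to $k$ and $k^2$ cancel between $(\mathbb{E}_{[m]}+\gamma_{[m]})^2 - 2(\mathbb{E}_{[m]}+\gamma_{[m]})$ and $\|x_{[m]}\|^2 \Delta_{[m]}$, leaving the same eigenvalue independent of $k$. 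Conceptually, this is the statement that $\|x_{[m]}\|^{2k} H_m = (2J_+)^k H_m$ is a higher-weight vector in the same $\mathfrak{su}(1,1)$ irreducible module built on the lowest-weight vector $H_m$, and the Casimir is central.

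With these two ingredients the proof concludes quickly. Taylor-expand
\[
G = \sum_{k} \|x_{[m]}\|^{2k}\, g_k(x_{m+1},\ldots,x_n);
\]
since $C_{[m]}$ involves only $x_i,\partial_{x_i},r_i$ with $i \in [m]$, the factors $g_k$ pass through $C_{[m]}$, and summing term-by-term yields the claimed eigenvalue equation. The main obstacle is rigorously establishing the radial decomposition $\mathscr{Y}_{\mathbf{l}}^{\vec\epsilon} = G\cdot H_m$ with $G$ depending on the first $m$ variables only through $\|x_{[m]}\|^2$; I would do this either by induction on the CK tower, showing at each step that $\mathbf{CK}_{x_i}^{\epsilon_i}[\|x_{[i-1]}\|^{2l_{i-1}}\,(\,\cdot\,)]$ preserves this radial-in-$x_{[m]}$ structure, or directly from the closed-form Jacobi-polynomial expression. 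After that, the eigenvalue computation is purely algebraic.
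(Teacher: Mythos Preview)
Your argument is correct, but the paper's proof is shorter and sidesteps the step you flag as the ``main obstacle''. Instead of establishing the factorization $\mathscr{Y}_{\mathbf{l}}^{\vec\epsilon} = G\cdot H_m$ with $G$ radial in $x_{[m]}$ and then expanding $G$ in powers of $\|x_{[m]}\|^{2}$, the paper simply observes that $[C_{[m]},x_p]=0$ and $[C_{[m]},\Delta_{[p-1]}]=0$ for $p>m$, so $C_{[m]}$ commutes with each outer map $\mathbf{CK}_{x_p}^{\epsilon_p}$; together with $[C_{[m]},\|x_{[m]}\|^2]=0$ (and the trivial commutation with $x_j^2$ for $j>m$) this lets $C_{[m]}$ slide past every $\|x_{[p-1]}\|^{2l_{p-1}}$ factor as well, landing directly on the intermediate harmonic $H_m\in\mathcal{H}_{d_m}(\mathbb{R}^m)$. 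Your Casimir-centrality argument on $\|x_{[m]}\|^{2k}H_m$ is exactly what makes the commutation $[C_{[m]},\|x_{[m]}\|^2]=0$ work, so the two proofs are close cousins; the paper's phrasing just packages everything as operator commutations and avoids both the radial decomposition and the term-by-term Taylor expansion. Your route has the virtue of making the underlying $\mathfrak{su}(1,1)$ module structure (lowest-weight vector $H_m$, raising by $J_+=\tfrac12\|x_{[m]}\|^2$) more visible.
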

\begin{proof}
First observe that 
\[ C_{[n]}\mathscr{Y}_{\mathbf{l}}^{\vec \epsilon} =\frac{1}{4}\left( \deg{\mathscr{Y}_{\mathbf{l}}^{\vec \epsilon}}+\gamma_{[m]}\right)\left(\deg{\mathscr{Y}_{\mathbf{l}}^{\vec \epsilon}}+\gamma_{[m]}-2\right)\mathscr{Y}_{\mathbf{l}}^{\vec \epsilon}, \]
by using the definition of $C_A$ in our case, formula \eqref{Casimir}. Also observe that $d_n=\deg{\mathscr{Y}_{\mathbf{l}}^{\vec \epsilon}}=k$. 
We now consider $C_{[m]}$. A short calculation shows that $[C_{[m]},x_p]=[C_{[m]},\Delta_{[p-1]}]=0$ for $p>m$. This means that $\mathbf{CK}_{x_p}^{\epsilon_p}$ commutes with $C_{[m]}$ for $p>m$. This gives us
\begin{multline*}
C_{[m]}\mathscr{Y}_{\mathbf{l}}^{\vec \epsilon}(x_1,\ldots, x_{n})=
\\
\mathbf{CK}_{x_{n}}^{\epsilon_n}\bigg[\|{x_{[n-1]}\|^{2l_{n-1}}}\cdots C_{[m]}\mathbf{CK}_{x_{m}}^{\epsilon_{m}}\Big[\|x_{[m-1]}\|^{2l_{m-1}}\big[\cdots \mathbf{CK}_{x_1}^{\epsilon_{1}}[1]\big]\Big]\bigg] .
\end{multline*}
$C_{[m]}$ now acts on an element of $\mathcal{H}_{d_m}^{\vec \beta}\left(\mathbb{R}^m\right)$ and this action we have written down at the beginning of this proof. This gives the sought-after action.
\end{proof}
In light of this result, the algebra $\widehat{\mathcal{Y}}_n$ will be represented by diagonal matrices if it acts on the Dunkl harmonics. This algebra labels the basis vectors of this representation so we will call it a labelling Abelian algebra. In the next section the action of the full Racah algebra on the Dunkl harmonics will be  described.

\section{Connection coefficients and representations of $R(n)$}
In the previous section, a basis of Dunkl harmonics that diagonalizes the labelling Abelian subalgebra $\hat{\mathcal{Y}_n}$ has been constructed. There are many similar bases. Indeed, upon permuting the order in which the $\mathbf{CK}$-extensions are applied, one can find other bases of Dunkl-harmonics which will diagonalize other labelling Abelian subalgebras.  For example, consider the case where $n=4$ with $\mathbf{CK}_{x_4}^{\epsilon_4}$ and $\mathbf{CK}_{x_1}^{\epsilon_1}$ swapped. This leads to the wavefunctions
$$
\mathscr{Y'}_{\mathbf{l}}^{\vec \epsilon} := \mathbf{CK}_1^{\epsilon_1} \left( \|x_{234}\|^{2l_3} \mathbf{CK}_3^{\epsilon_3}\left(  \|x_{24}\|^{2l_2}
\mathbf{CK}_2^{\epsilon_2} \left(\|x_4\|^{2l_1}
\mathbf{CK}_4^{\epsilon_4}\left(1\right)  \right)\right)\right).
$$
The labelling Abelian subalgebra being diagonalized is $\langle C_{24},C_{234} \rangle $. As can be seen, the indices have been permuted according to the permutation in the $\mathbf{CK}$-extensions. To each basis constructed in this way corresponds a labelling Abelian subalgebra of the full Racah algebra $R(n)$. Conversely, for every labelling Abelian subalgebra, there is a basis that can be constructed with the $\mathbf{CK}$-extension.

Given a basis defined as the joint eigenvectors of a labelling Abelian algebra, one would wish to provide the action of the other generators of $R(n)$ on the basis elements in order to complete the construction of the representation of $R(n)$. We shall now indicate that this can be accomplished from the knowledge of the connection or overlap coefficients between bases associated to different labelling Abelian subalgebras. With this understood, we shall be able to complete the picture by giving an explicit characterization of the connection coefficients.

Suppose that a basis has been picked and that want to give the action of a certain generator $C$ on the elements of this fixed basis.
It is not difficult to convince oneself that every generator of the generalized Racah algebra $R(n)$ belongs in a labelling Abelian subalgebra. There is thus another basis, let us call it prime, in which $C$ is diagonal and Proposition 8 will give its eigenvalues. Now if the connection coefficients between the elements of the original basis and those of the prime basis are explicitly known, from simple linear algebra, it is clear that the action of $C$ in the original basis can be written down. This applies to any generator. Hence if all bases associated to labelling Abelian algebras can be connected to the one originally picked to construct a representation of $R(n)$, the action of all generators in that basis can be obtained with the help of the overlap coefficients. We shall now describe how these connection coefficients can be obtained.

\subsection{Rank one} We first give an overview of the rank one case, which has been treated in depth in the literature; see for example \cite{Genest&Vinet&Zhedanov-2014-2}. In this case, i.e. for the Racah algebra $R(3)$, the three labelling Abelian subalgebras $\langle C_{12} \rangle $, $\langle C_{23} \rangle$ and $\langle C_{13} \rangle$ have only one generator. Consider an irreducible representation $V$ of $R(3)$, and let  $ \langle \phi_k \rangle$ be a set of basis vectors for $V$ on which $C_{12}$ acts in a diagonal fashion. We shall use the parametrization of the representations that is provided by the realization of the basis elements as Dunkl harmonics as per section 5 in particular. From Proposition 8, the action of the central elements is thus taken to be:
\begin{equation*}
C_i\phi_k=\lambda_{i}\phi_k, \qquad C_{123}\phi_k=\lambda_{123}\phi_k, 
\end{equation*}
where
\begin{align*}
&\lambda_i = \frac{1}{4}\left(\epsilon_i+\mu_i+\frac{1}{2}\right)\left(\epsilon_i+\mu_i-\frac{3}{2}\right),\qquad i=1,2,3,
\\
&\lambda_{123} = \frac{1}{4}\left(d_3+\mu_1+\mu_2+\mu_3+\frac{3}{2}\right)\left(d_3+\mu_1+\mu_2+\mu_3-\frac{1}{2}\right).
\end{align*}
Also from Proposition \ref{ActionofAbelianAlgebra}, it is seen that the eigenvalues of $C_{12}$ on  $\langle \phi_k\rangle$ are of the form
\begin{align*}
C_{12} \phi_{k} = \omega_k \phi_{k},\qquad \omega_k = \frac{1}{4}(2k+\sigma)(2k+\sigma-2),\quad k=0,1,2,\ldots
\end{align*}
where $\sigma = \mu_1+\mu_2 + \epsilon_1 + \epsilon_2 + 1$. As shown in \cite{Genest&Vinet&Zhedanov-2014-2}, $C_{23}$ acts in a tridiagonal fashion on $\langle \phi_k \rangle$. Here we assume the basisvectors to be normalized. One has
\begin{align*}
C_{23} \phi_{k} = U_{k+1} \phi_{k+1} + B_{k} \phi_{k} + U_{k} \phi_{k-1},
\end{align*}
where
\begin{align*}
B_k = \frac{1}{2}\left(\lambda_{123}-\omega_k -\frac{(\lambda_2-\lambda_1)(\lambda_3-\lambda_{123})}{\omega_k}\right),
\end{align*}
and 
\begin{align*}
U_{k}^2 = \frac{k(k+\beta)(k+\alpha-\delta)(k+\alpha+\beta-\gamma)(k+\alpha)(k+\alpha+\beta)(k+\gamma)(k+\beta+\delta)}{(2k+\alpha+\beta-1)(2k+\alpha+\beta)^2(2k+\alpha+\beta+2)},
\end{align*}
with
\begin{align}
\label{para}
\begin{aligned}
&\alpha=\mu_1+\epsilon_1-\frac{1}{2},&\quad &\beta=\mu_2+\epsilon_2-\frac{1}{2}, 
\\
&\gamma=-\mu_3-\frac{d_3+1-\epsilon_1-\epsilon_2+\epsilon_3}{2},&\quad &\delta=-\mu_2-\frac{d_3+1-\epsilon_1+\epsilon_2-\epsilon_3}{2}.
\end{aligned}
\end{align}
Let us now consider a basis $ \langle \psi_s \rangle $ diagonalizing $\langle C_{23} \rangle$, that is $C_{23}\psi_s=\mu_s\psi_s$, where $\mu_s$ is the same as $\omega_s$ with the indices permuted according to the permutation $\pi = (123)$.  The connection coefficients $W_{sk}$ between the two bases are defined as
\[ \psi_s=\sum_k W_{sk}\phi_k. \]
Upon acting on both sides  with $C_{23}$ and using the action of $C_{23}$, one gets
\[ \mu_s\psi_s=\sum_k W_{sk}(U_{k+1}\phi_{k-1}+B_{k}\phi_k+U_{k}\phi_{k+1}). \]
Writing $\psi_{s}$ as a linear combination of the basis vectors $\phi_k$ and  introducing $P_k(\mu_s)W_{0s}=W_{ks}$ with $P_{0}(\mu_s)=1$, it follows that
\[ \mu_sP_{k}(\mu_s)=U_{k+1}P_{k+1}(\mu_s)+B_{k}P_{k}(\mu_s)+U_{k}P_{k-1}(\mu_s). \]
From this formula, it is seen that $P_k(\mu_s)$ are polynomials of degree $k$ in $\mu_s$. One can take $\hat{P}_k(x)=U_{1}\cdots U_{k} \,P_k(x)$, and the relation becomes
\[ x\hat{P}_k(x)=\hat{P}_{k+1}(x)+B_k\hat{P}_k(x)+U_k^2\hat{P}_{k-1}(x). \]
Introducing $\tilde{x}=x+\tau$ and $H_k(\tilde{x})=\hat{P}_k(x)$, one has
 \[ \tilde{x}H_k(\tilde{x})=H_{k+1}(\tilde{x})+(B_{k}+\tau)H_k(\tilde{x})+U_k^2H_{k-1}(\tilde{x}). \]
With $\tau=\frac{1}{4}(\gamma+\delta+1)(\gamma+\delta)$, this recurrence relation coincides with the one defining the Racah polynomials $R_n(x;\alpha,\beta,\gamma,\delta)$ with parameters $\alpha,\beta,\gamma,\delta$ as in \eqref{para}.  The interbasis expansion coefficients between the bases associated to $\langle C_{12}\rangle$ and $\langle C_{23}\rangle$ are thus expressed in terms of one-variable Racah polynomials. It is easily shown that this holds for any pair of Casimir operator $C_{ij}$ in the rank one Racah algebra. For convenience, the connection coefficients will be written simply as functions of $\lambda_{i}$, for $i=1,2,3$, and $\lambda_{123}$. We shall hence write
\[\psi_s=\sum_k W_{sk}(\lambda_{123},\lambda_1,\lambda_2,\lambda_3)\phi_k. \]

\subsection{Higher rank}
Let us now consider the connection coefficients in higher rank cases. For simplicity, the discussion below will pertain to the rank two case, $R(4)$. However, the analysis and the conclusions extend to any rank. First we  consider the overlap coefficients between two bases associated to two labelling Abelian subalgebras that differ by only one generator. An example of two such subalgebras for $R(4)$ is
\[ \langle C_{12},C_{123} \rangle, \quad \langle C_{12},C_{124} \rangle.\]
Let us look at the common eigenspaces of the generators appearing in both labelling Abelian subalgebras. In the above example these will be the eigenspaces of $C_{12}$. Since the generators that differ between the two algebras commute with the common ones, they will preserve the corresponding eigenspaces. In the example, these elements are $C_{123}$ and $C_{124}$. Now note that these elements are also the generators of the rank one algebra: $R^{12,3,4}(3)$ This follows from the discussion at the end of Section 4 by taking for the sets $K,L,M$: $K=\{1,2\}$, $L=\{3\}$, $M=\{4\}$. Manifestly the generators $C_{K\cup L}=C_{123}$, $C_{K\cup M}=C_{124}$, and $C_{L\cup M}=C_{34}$ of this rank one Racah algebra all commute with $C_{12}$, the common element of the two Abelian algebras. Hence, the subspaces spanned by the basis vectors with fixed eigenvalues of the common element $C_{12}$ will therefore support representations of $R(3)$. This allows to use the analysis for representations of the rank one Racah algebra which tells us that the connection coefficients are Racah polynomials. 

For example, take $\langle \phi_{j_1,j_2} \rangle$ and $\langle \psi_{j_1,j_2} \rangle$ to be the bases diagonalizing the labelling Abelian subalgebras in our example. The Abelian algebras act as follows:
\[ C_{12}\phi_{j_1,j_2}=\omega_{j_1}^{12}\phi_{j_1,j_2}, \qquad C_{123}\phi_{j_1,j_2}=\omega_{j_1+j_2}^{123}\phi_{j_1,j_2},  \]
and similarly for the other basis by replacing $\omega$ by $\mu$. We set $\omega_{j_1}^{12}=\mu_{j_1}^{12}$. The connection coefficients become:
\[ \phi_{j_1,j_2}=\sum_{k}W_{j_2k}(\lambda_{1234},\lambda_3,\omega_{j_1}^{12},\lambda_4)\psi_{j_1,k} \]
with $\lambda_{1234}$, $\lambda_3$ and $\lambda_4$ the scalar belonging to the central generators $C_{1234}$, $C_{3}$ and $C_4$. The observations made above can now be used to obtain the action of $C_{124}$ and $C_{34}$ on the basis vectors $\phi_{j_1, j_2}$. To find the action of other generators one must consider the relations of the basis $\{\phi_{j_1,j_2}\}$ with other subalgebra-type bases and check that the scheme allows to determine the action of all generators in this way. To address these points, it is useful to have the following picture in terms of a recoupling graph. Let every labelling Abelian subalgebra be represented by a vertex. Two vertices are connected by an edge if the corresponding bases only differ by one generator. Every edge thus represents a change of basis where the connection coefficients are univariate Racah polynomials.

\begin{center}
	\scalebox{.75}{
\begin{tikzpicture}
\draw (30:3cm)--(60:3cm) ;
\draw (60:3cm)--(90:3cm);
\draw (90:3cm)--(120:3cm);
\draw (120:3cm)--(150:3cm);
\draw (150:3cm)--(180:3cm) ;
\draw (180:3cm)--(210:3cm);
\draw (210:3cm)--(240:3cm);
\draw (240:3cm)--(270:3cm);
\draw (270:3cm)--(300:3cm) ;
\draw (300:3cm)--(330:3cm);
\draw (330:3cm)--(0:3cm);
\draw (0:3cm)--(30:3cm);
\fill (30:3cm) circle [radius=2pt] node[anchor=south west] {$(C_{12},C_{123})$};
\fill (60:3cm) circle [radius=2pt] node[anchor=south west] {$(C_{12},C_{124})$};
\fill (90:3cm) circle [radius=2pt] node[anchor=south] {$(C_{14},C_{124})$};
\fill (120:3cm) circle [radius=2pt] node[anchor=south east] {$(C_{24},C_{124})$};
\fill (150:3cm) circle [radius=2pt] node[anchor=south east] {$(C_{24},C_{234})$};
\fill (180:3cm) circle [radius=2pt] node[anchor=east] {$(C_{23},C_{234})$};
\fill (210:3cm) circle [radius=2pt] node[anchor=north east] {$(C_{34},C_{234})$};
\fill (240:3cm) circle [radius=2pt] node[anchor=north east] {$(C_{34},C_{134})$};
\fill (270:3cm) circle [radius=2pt] node[anchor=north] {$(C_{14},C_{134})$};
\fill (300:3cm) circle [radius=2pt] node[anchor=north west] {$(C_{13},C_{134})$};
\fill (330:3cm) circle [radius=2pt] node[anchor=north west] {$(C_{13},C_{123})$};
\fill (0:3cm) circle [radius=2pt] node[anchor=west] {$(C_{23},C_{123})$};
\draw (180:3cm)--(0:3cm);
\draw (90:3cm)--(270:3cm);
\draw (60:3cm)--(120:3cm);
\draw (150:3cm)--(210:3cm);
\draw  (240:3cm) --(300:3cm) ;
\draw  (330:3cm)--(30:3cm) ;
\end{tikzpicture}}
\end{center}
Since the graph is connected, there is a path taking any given basis to all the others. The connection coefficients between any two bases are thus obtained by iterating along the edges of the path the procedure we described in the case of algebras with one element in common. The resulting overlap coefficients will be given by products of univariate Racah polynomials. Furthermore, since all generators are part of a labelling Abelian algebra (as is seen on the graph) and are diagonal in the basis attached to this algebra, the knowledge of the connection coefficients allows to obtain the action of all generators in a given fixed basis.

It is straightforward to see how these considerations extend from $R(4)$ to $R(n)$. We shall close this section by showing for arbitrary rank and using the recoupling graph that all bases associated to labelling Abelian algebras are related and that the representation in a selected subalgebra-type basis can be fully characterized.
\begin{proposition}
The recoupling graph of $R(n)$ is connected.
\end{proposition}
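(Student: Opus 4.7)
The plan is to prove connectedness by induction on $n$, using the combinatorial structure of the vertices. Each vertex of the recoupling graph $G_n$ corresponds to a maximal chain $A_2\subsetneq A_3\subsetneq\cdots\subsetneq A_{n-1}$ of subsets of $[n]$ with $|A_k|=k$, since the generators $C^{A_2},\ldots,C^{A_{n-1}}$ along such a chain are mutually commuting by Lemma \ref{commutators} and, together with the central elements, span a maximal Abelian subalgebra. Two chains are joined by an edge precisely when they coincide at every index except one.

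For the base case $n=3$ the chains are individual $2$-subsets of $[3]$, and any two distinct chains differ at their unique entry, so $G_3$ is the complete graph on three vertices. For the inductive step, assume that the analogous graph of chains is connected for every ground set of size strictly less than $n$, and fix two chains $\mathcal{C}=(A_2,\ldots,A_{n-1})$ and $\mathcal{C}'=(A_2',\ldots,A_{n-1}')$. First I would treat the special case $A_{n-1}=A_{n-1}'$: the truncated chains $(A_2,\ldots,A_{n-2})$ and $(A_2',\ldots,A_{n-2}')$ live inside the $(n-1)$-element set $A_{n-1}$, so the inductive hypothesis furnishes a connecting path in the recoupling graph attached to $A_{n-1}$, and this path lifts to a path in $G_n$ throughout which $A_{n-1}$ is left unchanged.

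For the general case, write $A_{n-1}=[n]\setminus\{i\}$ and $A_{n-1}'=[n]\setminus\{j\}$ with $i\neq j$. A single elementary move can replace $A_{n-1}$ by $A_{n-1}'$ only when $A_{n-2}=A_{n-1}\cap A_{n-1}'=[n]\setminus\{i,j\}$, so I would first invoke the special case to modify the initial segment of $\mathcal{C}$, keeping $A_{n-1}$ fixed, until $A_{n-2}=[n]\setminus\{i,j\}$; this is achievable because $[n]\setminus\{i,j\}$ is an $(n-2)$-subset of $A_{n-1}$. One elementary move then replaces $A_{n-1}$ by $A_{n-1}'$, and a final application of the special case matches the remainder of the chain to $\mathcal{C}'$. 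I do not foresee a genuine obstacle: the only point requiring mild care is verifying that the lift of a path in the $(n-1)$-graph attached to the fixed top set $A_{n-1}$ consists of moves admissible in $G_n$, which follows immediately from the definition of the elementary moves.
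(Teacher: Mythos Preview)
Your proof is correct, but it takes a different route from the paper's. The paper labels each vertex by an ordered list $[k_1,\ldots,k_n]$ of indices (so that the chain is $\{k_1,k_2\}\subset\{k_1,k_2,k_3\}\subset\cdots$) and observes that an edge of the recoupling graph corresponds exactly to swapping two \emph{adjacent} entries of this list; connectedness then follows because any permutation of $[n]$ can be realised as a product of adjacent transpositions, via an explicit bubble-sort argument. Your argument instead fixes the top set $A_{n-1}$ and inducts on $n$, reducing to the recoupling graph of the $(n-1)$-element set $A_{n-1}$. Both approaches are short and elementary; the paper's version is slightly more explicit (it gives an actual path of length at most $2(j-i)-1$ between chains differing by a transposition of $k_i$ and $k_j$), while yours is cleaner structurally and avoids the mild redundancy in the paper's labelling, where the orderings $[k_1,k_2,\ldots]$ and $[k_2,k_1,\ldots]$ name the same vertex.
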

\begin{proof}
Any basis is a permutation of $\mathbf{CK}$-extensions of any other basis. Any permutation can be written as a succession of adjacent transpositions. It will suffice to prove that there exists a path between any pair of bases that differ by an adjacent transposition. 
Consider the following labelling Abelian subalgebra in $R(n)$:
\[ \langle C_{k_1k_2},C_{k_1k_2k_3},\ldots,C_{k_1\ldots k_{n-1}} \rangle. \]
To every labelling Abelian subalgebra we construct a unique list of ordered indices:
\[ [k_1,k_2,k_3,\ldots,k_n] \]
Going from one basis to another that differs by one element corresponds to flipping two adjacent indices in the list. This is an adjacent transposition. This concludes the proof. \end{proof}

As a closing remark we explain the connection with the tree method. Our basis is constructed adding one variable at a time. This corresponds to polynomials associated to trees in which each vertex has a leaf or is a leaf. In \cite{Scar} another set of polynomials was constructed using the tree method. To each tree they associated a set of multidimensional $q$-Hahn polynomials. The same article also constructed connection coefficients between these polynomials associated to different trees which turn out to be $q$-Racah polynomials. This is in agreement with our results. Even more so, they proved that any tree can be transformed into any other tree, which corresponds to the the proof of connectedness of our graph.

\section{Conclusion}

Summing up, we have introduced a higher rank generalization of the Racah algebra. It has been shown that this algebra is the symmetry algebra of both the $\mathbb{Z}_2^n$ Laplace-Dunkl equation and the generic superintegrable system on the $n$-sphere. Using a Cauchy-Kovalevskaia extension theorem, bases for the space of Dunkl harmonics were constructed. We explained how the interbasis expansion coefficients between these bases can be obtained, and how they allow to compute matrix elements for representations of the higher rank Racah algebra.
\section*{Acknowledgements}
\noindent
The research of HDB is supported by the Fund for Scientific Research-Flanders (FWO-V), project ``Construction of algebra realizations using Dirac-operators'', grant G.0116.13N. VXG holds a postdoctoral fellowship from the Natural Science and Engineering Research Council of Canada (NSERC). The research of LV is supported in part by NSERC. WVDV is grateful to the organizers of the conference `Dunkl operators, special functions and harmonic analysis', Paderborn, August 8 - 12, 2016, where these results were presented.


\end{document}